\documentclass[a4paper,UKenglish,cleveref, autoref, thm-restate]{lipics-v2021}

\title{Beating Trivial Time for Tricky Triangle Tasks} 

\author{Neha Pant}{MIT}{nehapant@mit.edu}{}{}
\author{Ryan Williams}{MIT}{rrw@mit.edu}{}{}

\authorrunning{N. Pant and R. Williams} 

\Copyright{Neha Pant and Ryan Williams} 

\ccsdesc[100]{graph algorithms analysis} 

\keywords{sparse graph algorithms, triangle, Word RAM, 4-cycle} 

\relatedversion{} 

\funding{Parts of this work were completed while R.W. was visiting the Institute for Advanced Study, Princeton, NJ. This material is based upon work supported by the National Science Foundation under grants DMS-2424441 (at IAS) and CCF-2420092 (at MIT).} 

\nolinenumbers 

\EventEditors{Michal Kouck\'{y} and Daniela Petri\c{s}an}
\EventNoEds{2}
\EventLongTitle{51st International Symposium on Mathematical Foundations of Computer Science (MFCS 2026)}
\EventShortTitle{MFCS 2026}
\EventAcronym{MFCS}
\EventYear{2026}
\EventDate{August 24--28, 2026}
\EventLocation{Paris, France}
\EventLogo{}
\SeriesVolume{386}
\ArticleNo{8}

\usepackage{amsmath, amssymb,bbm,amsthm}
\usepackage{thm-restate,thmtools}

\newcommand{\eps}{\varepsilon}

\newcommand{\poly}{{\sf{poly}}}
\newcommand{\E}{\mathbb{E}}

\newcommand{\ac}{{\mathsf{AC}}}
\newcommand{\AC}{\ac}
\newcommand{\calH}{\mathcal{H}}
\newcommand{\nc}{{\mathsf{NC}}}

\newenvironment{reminder}[1]{\medskip
	\noindent {\sf{\textbf{Reminder of #1.}}}}{\smallskip}

\newif\iffullversion

\fullversiontrue    

\begin{document}

\maketitle

\begin{abstract}
For several well-studied triangle detection problems in the literature, the trivial enumeration algorithms are known to be optimal (up to the exponent) assuming popular fine-grained conjectures. For example, All-Edges Sparse Triangle and Sparse Monochromatic Triangle where each node has degree $n^{\delta}$ for some $\delta < 1$, and the Exact Triangle where edges have arbitrary weights, all have this property under the 3SUM Conjecture. However, as there are slightly nontrivial algorithms for 3SUM, it is natural to wonder if the trivial algorithm for these tricky triangle tasks might also be improved.

Applying a variety of techniques from randomized algorithms, circuit complexity, and communication complexity, we present the first improvements over the trivial algorithms for each of these problems in the Word RAM model. Moreover, our algorithms can be implemented with \emph{only} polysize $\AC0$ operations on words. Extending our techniques, we also show how to solve the notorious 4-cycle detection problem on $n$-node graphs in $o(n^2)$ time, in a Word-RAM model with word size $w > \omega(\log^2 n)$. Along the way, we show how to sort $n$ items over a universe of size $2^u$ using only $\ac0$ word operations in $O(n u \log n)/w$ time.
\end{abstract}

\iffullversion{\bigskip
\bf \emph{Note: This is the full version of a paper that will appear in MFCS 2026.}}
\else{}
\fi

\newpage
\section{Introduction} \label{sec:intro}

Arguably the most fundamental graph problems in fine-grained complexity are concerned with finding small cliques of various kinds, especially triangles (3-cliques). Among the large class of problems that are known to be \emph{subcubic equivalent} to the All-Pairs Shortest Paths (APSP) problem, the most basic such problem is {\sc Negative Triangle}: find a triangle with negative edge sum in an edge-weighted graph \cite{VW18}. Both the quadratic-time hardness of the infamous 3SUM problem and the cubic-time hardness of the APSP problem imply several hardness hypotheses for triangle problems: 
\begin{itemize}
\item the hypothesis that the {\sc Exact Triangle} problem, of finding a triangle with edge sum zero \cite{VW13} in an edge-weighted graph, requires $n^{3-o(1)}$ time, and
\item the hypothesis that the {\sc All-Edges Sparse Triangle} problem, of determining whether each edge in an $n$-node graph with max degree at most $n^{\delta}$ participates in a triangle, requires $n^{1+2\delta-o(1)}$ time~\cite{VX20}. 
\end{itemize}
Similarly, for the {\sc All-Edges Sparse Monochromatic Triangle} problem, where we have an $n$-node graph with colors on its edges and maximum degree at most $n^{\delta}$ and we want to determine which edges participate in monochromatic triangles, it is known that the problem requires $n^{1+2\delta-o(1)}$ time under both the APSP and the 3SUM hypotheses~\cite{LPV20, VX20}.

Already from the work of Fredman in 1976 \cite{Fre76} leading up to the work of Williams \cite{Wil18} on APSP, it has been known that {\sc Negative Triangle} can be solved in $o(n^3)$ time on $n$-node graphs, with the best known running time being $n^3/2^{\Omega(\sqrt{\log n})}$ \cite{Wil18}. In contrast, there have been no reported improvements over the obvious running times for {\sc All-Edges Sparse Triangle}, {\sc Exact Triangle}, and {\sc All-Edges Sparse Monochromatic Triangle}. In this paper, we present improved algorithms for all three of these problems. 

\subparagraph*{Our Results for Triangles in Sparse Graphs.} For a parameter $\delta \in (0,1/2]$, the {\sc All-Edges Sparse Triangle} problem asks: \emph{given an $n$-node graph where each node has degree at most $n^{\delta}$, decide whether each edge of the graph is in a triangle.} 

Although for dense graphs, it is well-known that matrix multiplication leads to faster algorithms~\cite{AYZ97}, the best known algorithm for {\sc All-Edges Sparse Triangle} simply enumerates all pairs of neighbors from each node, taking $O(n^{1+2\delta})$ time. Under standard fine-grained hardness assumptions, there is a matching lower bound. The work of P{\u{a}}tra{\c{s}}cu~\cite{Pat10} showed that the problem requires $n^{1+2\delta - o(1)}$ time under the 3SUM hypothesis.  Vassilevska~W. and Xu~\cite{VX20} showed the same lower bound holds under the Exact Triangle Hypothesis (the hypothesis being that the problem requires $n^{3-o(1)}$ time). We note that this reduction incurs an $O(\log n)$ factor, and therefore shaving a log factor from All-Edges Sparse Triangle does not immediately carry over to Exact Triangle. Chan, Vassilevska~W. and Xu~\cite{DBLP:conf/stoc/ChanWX22} gave a similar conditional lower bound assuming that either APSP or 3SUM over the reals cannot be solved in $n^{3-\eps}$ time or $n^{2-\eps}$ time, respectively.

We present the first improvement over the trivial algorithm for the problem. To keep ourselves honest (disallowing arbitrary multiplications and disallowing arbitrary lookup tables of size $2^w$), our computational model will be the Word RAM model with $\AC0$ operations on words of $w$ bits. That is, our unit-time operations are restricted to those that computed on a constant number of $w$-bit words, by a $\poly(w)$-size $\AC0$ circuit. (Note $w = \Theta(\log n)$ is the standard choice.) This is a standard model; see \Cref{sec:prelims} for more details. 

Please observe we are making a \emph{restriction} on the standard Word RAM model. When $w \leq \delta \log n$ for $\delta < 1$, our algorithms can be implemented with lookup tables for practically any reasonable random-access model with constant-time lookup. Our theorems imply not only that all these problems can be solved faster in the standard Word RAM, \emph{but also} in a model with a more restricted set of operations where no $2^w$-bit lookup tables are allowed (which would permit \emph{arbitrary} operations on $w$-bit words).

\begin{theorem} \label{thm:all-edges} For all $\delta \in (0,1/2]$, there is a Las Vegas algorithm for {\sc All-Edges Sparse Triangle} on graphs with maximum degree $n^{\delta}$ running in $O(n^{1+2\delta} \log w)/ w$ expected time, where $w$ is the word size.
\end{theorem}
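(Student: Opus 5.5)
Our plan is to reduce the all-edges problem to many small set-intersection instances and then pack those instances into words. For each edge $(u,v)$, the edge lies in a triangle iff $N(u)\cap N(v)\neq\emptyset$. The trivial algorithm spends $O(n^{\delta})$ time per edge, for $n^{1+\delta}$ edges in total. To save a factor $w/\log w$, we will encode each neighborhood compactly and test the intersections with word-parallel $\AC0$ operations.

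\textbf{Step 1: shrink the universe by random hashing.} Pick a random hash function $h:[n]\to[m]$ with $m = \Theta(n^{\delta}\cdot\poly(w))$, or more carefully $m$ chosen so that each label fits in $O(\log w)$ bits after a second bucketing step. Concretely, we first partition $[n]$ into $n^{\delta}\cdot\poly(w)$ buckets via $h$. Within a fixed neighborhood $N(u)$, which has size at most $n^{\delta}$, the expected number of colliding pairs is small. For each $u$ we then store $N(u)$ as a sorted list of hashed values. Since we only need $O(\log w)$ bits per element relative to a bucket offset, we instead store, for each $u$, a bitmap of $h(N(u))$ restricted to blocks.

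A cleaner version: partition $[n]$ into $n^{\delta}$ groups of size $n^{1-\delta}$ by a random hash $g$. Within each group, hash names to $[\poly(w)]$, which takes $O(\log w)$ bits. Then $N(u)$ becomes a list of $(g(x), h(x))$ pairs.

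\textbf{Step 2: packed intersection.} Fix an edge $(u,v)$. Sort both lists by $g$-value. For each group index, we must test whether the short $O(\log w)$-bit labels of $u$ and $v$ in that group intersect. We pack $w/\log w$ labels per word. Two packed sorted lists can be intersected (equality test of fields) in $O((a+b)\log w/w)$ time using $\AC0$ word operations. The method is a merge-style comparison of blocks, with parallel field comparisons via subtraction/equality circuits, which are in $\AC0$. We will invoke the paper's $\AC0$ sorting result for the preprocessing, that is, sorting the neighbor lists by their compressed keys in $O(n^{\delta}\log w\cdot\log n/w)$ per list, or simply in linear time overall since preprocessing is only $O(n^{1+\delta})$.

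This gives $O(n^{\delta}\log w/w + 1)$ time per edge, so $O(n^{1+2\delta}\log w/w)$ in total. Note that $n^{\delta}\ge w/\log w$ is the only interesting regime; otherwise the trivial bound already suffices.

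\textbf{Step 3: Las Vegas verification.} A hash collision can produce a false positive: two distinct vertices $x\neq y$ with equal compressed labels. Whenever a packed comparison reports a match, we locate the matching field and check the true vertex identities in $O(1)$ time. If the match is false, we continue scanning. The expected number of false matches per edge is $n^{\delta}\cdot n^{\delta}/(n^{\delta}\cdot\poly(w))$ summed appropriately, which is $O(n^{\delta}/\poly(w))$, and this is dominated by the main term. False negatives cannot occur, since equal vertices always receive equal labels. The algorithm is therefore always correct, and only its running time is random.

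\textbf{Main obstacle.} The hardest step is performing the packed merge and intersection of two sorted lists of $O(\log w)$-bit keys using only $\AC0$ operations, without lookup tables. A merge step must compare whole blocks of $k=w/\log w$ keys against each other, which involves $k^2$ pairs. That is too many to compare directly in a word.

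I would first try the following. Sort each block-pair jointly using the paper's $\AC0$ sorting routine applied within words: bitonic-style merging in $O(\log k)$ steps, each costing $O(1)$ word operations. This costs $O(\log w)$ per block. Then detect adjacent equal fields in $O(1)$.

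The total would be $O(n^{\delta}\log w/w)$ word operations per edge, matching the stated bound.
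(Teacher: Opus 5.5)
\textbf{Gap 1: the group count.} With $n^{\delta}$ groups, a neighborhood of size at most $n^{\delta}$ has only $O(1)$ expected elements per group. Your per-group bound $O((a+b)\log w/w)$ therefore silently drops an additive $\Omega(1)$ per group. If you handle groups one at a time, that costs $\Omega(n^{\delta})$ per edge, so there is no saving. If you instead merge the whole lists of $(g(x),h(x))$ pairs, every key carries the $\delta\log n$ bits of $g(x)$. Then only $O(w/\log n)$ keys fit in a word, giving $O(n^{\delta}\log n/w)$ per edge, which is no better than trivial when $w=\Theta(\log n)$. Dropping $g(x)$ from the keys instead gives about $n^{2\delta}/\poly(w)$ spurious matches per edge.

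The missing idea is to make the classes coarse:
\begin{itemize}
\item Color $K$ with $n^{\delta}/w$ colors using an $O(\log n)$-wise independent hash, so each vertex has $w$ expected neighbors per class.
\item Use the Chernoff bound for $k$-wise independence (with $w\ge c\log n$) and a union bound to ensure every vertex has $O(w)$ neighbors in \emph{every} class; rerun the coloring otherwise.
\item Give each (vertex, class) list its own block of $O(w\cdot\log w/w)=O(\log w)$ words. The class is then encoded by position rather than by bits in each key.
\item Hash names into $[O(w)^3]$. This now only has to avoid collisions \emph{within} a class, giving $O(1/w)$ expected false matches per class per query.
\end{itemize}
The $O(1)$ per-class overhead is then absorbed by the $O(\log w)$ per-class merge, for $(n^{\delta}/w)\cdot O(\log w)$ time per edge.

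\textbf{Gap 2: the merge.} Your ``main obstacle'' is not an obstacle in this model, and your workaround costs an extra $\log w$ factor. A unit-time operation is any $\poly(w)$-size $\AC0$ circuit on $O(1)$ words. So comparing all $O((w/\log w)^2)$ pairs of $O(\log w)$-bit keys from two words is a single operation. This is exactly how the single-word merge step runs in $O(1)$ time. Break ties with an appended bit; then, for each $x$ in one word, find in parallel the unique $j$ with $L_1[j]<x<L_1[j+1]$, which fixes $x$'s output position, and OR the resulting gadgets. With your bitonic fallback, each of the $\Theta(n^{\delta}\log w/w)$ word pairs per edge costs $O(\log w)$. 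That totals $O(n^{\delta}\log^2 w/w)$ per edge, not the $O(n^{\delta}\log w/w)$ you state.
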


Our ideas are general enough that they extend in a natural way to solve the apparently harder \emph{monochromatic} triangle problem in sparse graphs as well. 

\begin{theorem}\label{thm:monochromatic} For all $\delta \in (0,1/2]$, there is a Las Vegas algorithm for {\sc All-Edges Sparse Monochromatic Triangle}~ running in $O(n^{1+2\delta} \log w)/ w$ expected time.
\end{theorem}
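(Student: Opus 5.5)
We treat Theorem~\ref{thm:monochromatic} as a colored version of Theorem~\ref{thm:all-edges}. Every step of the all-edges algorithm that tests whether two neighborhoods intersect becomes a test of whether two \emph{colored} neighborhoods intersect. For an edge $(u,v)$ with color $c$, we must decide whether some $x$ satisfies $x \in N(u)\cap N(v)$, $\mathrm{col}(u,x)=c$ and $\mathrm{col}(v,x)=c$. Let $N_c(u)$ be the set of neighbors $x$ of $u$ with $\mathrm{col}(u,x)=c$. The question is then whether $N_c(u)\cap N_c(v)\neq\emptyset$ for $c=\mathrm{col}(u,v)$.

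The first step is to reduce the problem to the uncolored setting. For every color $c$, let $G_c$ be the subgraph consisting of the edges of color $c$. A triangle is monochromatic exactly when it is a triangle in some $G_c$, so the task is to run All-Edges Triangle on every $G_c$. Each $G_c$ still has maximum degree at most $n^{\delta}$. The trivial cost summed over colors is $\sum_c\sum_v \deg_c(v)^2 \le \sum_v \deg(v)^2 \le n^{1+2\delta}$, which is no worse than the uncolored case.

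The second step is to check that the algorithm of Theorem~\ref{thm:all-edges} runs in time $O(\log w/w)$ times the trivial enumeration cost $\sum_v \deg(v)^2$, rather than $O(n^{1+2\delta}\log w/w)$ for each graph separately. That algorithm should have this form: it hashes vertex names into short fingerprints, packs the neighborhoods into words, and checks intersections with word-parallel $\AC0$ operations, followed by Las Vegas verification of candidate witnesses. If so, running it on each $G_c$, each of which has $n_c$ non-isolated vertices, gives total time
\[
O\Big(\sum_c \big(n_c + \textstyle\sum_v \deg_c(v)^2 \log w/w\big)\Big).
\]
The term $\sum_c n_c \le 2m \le n^{1+\delta}$ is dominated by the main term once $n^{\delta}\ge w/\log w$; smaller $\delta$ can be handled by the trivial algorithm or by padding. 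We must also partition the edges by color, which can be done with a sort. The paper's $\AC0$ sorting result, which runs in time $O(nu\log n)/w$, can be used for this, as can simple bucketing by color.

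I expect the main obstacle to be that the all-edges algorithm may not scale additively in this way. It may depend on the degree bound $n^{\delta}$ rather than on the actual degrees. For example, it may pack blocks of $w/\log w$ neighbors into words and pay a fixed cost per block, and many color classes could have tiny degrees that fill words poorly. If that happens, I would fall back on a direct adaptation instead of splitting by color. In the word-packing step, each stored fingerprint of a neighbor $x$ of $u$ is replaced by a fingerprint of the pair $(x,\mathrm{col}(u,x))$, hashed into $O(\log w)$ bits. Then $N_c(u)\cap N_c(v)\neq\emptyset$ exactly when the packed pair sets of $u$ and $v$ share an entry whose color field equals $c$. That check is an equality comparison between fields followed by a masked OR, which is still in $\AC0$. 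False positives caused by hash collisions are removed by explicit verification, and the expected number of collisions is kept small exactly as in the uncolored analysis. This yields the same $O(n^{1+2\delta}\log w)/w$ expected bound.
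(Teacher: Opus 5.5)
Your fallback is the paper's proof. The paper keeps one random partition of $K$ into classes. For each vertex $i$ and class $S$ it stores a sorted packed list of entries $h(x)\,\|\,h'(\mathrm{col}(i,x))$ and intersects the two lists class by class, as in Theorem~\ref{thm:all-edges}. It then XORs the lists with $h'(c)$ replicated into every label field, builds an all-zeros mask, ANDs that mask with the intersection mask, and verifies the surviving candidates under the same false-positive budget. This requires the label to occupy its own field, as your phrase ``color field'' suggests, rather than being hashed jointly with $x$. With a single joint fingerprint, true common neighbors $x$ with $\mathrm{col}(i,x)=\mathrm{col}(j,x)\neq c$ could not be discarded in parallel, and there can be $\Theta(n^{\delta})$ of them per query.

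Your primary route, splitting into the $G_c$, is a genuinely different decomposition and it does work, but not for the reason you give. The all-edges algorithm does not cost $O(\log w/w)\sum_v\deg(v)^2$. Every query scans all $n^{\delta}/w$ classes at $O(\log w)$ each, so the algorithm costs $\Theta(m\,n^{\delta}\log w/w)$ whatever the actual degrees are. Theorem~\ref{thm:all-edges} also cannot be applied to $G_c$ as a black box: its hypotheses ($m=n^{1+\delta}$, $\Delta=O(n^{\delta})$, $\delta\le 1/2$) tie the degree bound to the vertex count, and a $G_c$ with a vertex of degree $\Theta(n_c)$ violates them.

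What matters is only the per-query cost, and then the obstacle you fear disappears. The relevant quantity is $\sum_c m_c\Delta_c\le m\,n^{\delta}=O(n^{1+2\delta})$, not $\sum_c\sum_v\deg_c(v)^2$. The simplest fix is to keep a single partition of $K$ into $n^{\delta}/w$ classes. For every pair $(u,c)$ such that $u\in I\cup J$ has a color-$c$ edge into $K$, store the packed lists of $N_c(u)\cap S$ for all classes $S$.
\begin{itemize}
\item Since $N_c(u)\subseteq N(u)$, the $O(w)$ bound per class is inherited from the uncolored analysis, with no new union bound.
\item A query $(i,j)$ of color $c$ intersects the lists of $(i,c)$ and $(j,c)$ at the uncolored cost $O(n^{\delta}\log w/w)$, and every common element is a monochromatic witness up to hash collisions.
\item There are at most $m$ such pairs, so storage is $O(m\,n^{\delta}\log w/w)$ words.
\end{itemize}

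This variant needs no label hashing and no extra mask, but it requires bucketing each vertex's edges by color in preprocessing. The paper's variant keeps one list per (vertex, class) and handles the labels inside the word operations, at the cost of doubling the field width.
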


Our techniques imply a new data structure for the {\sc Set Intersection} query problem: given sets $S_1,\ldots,S_n$ each of size $O(d)$, we wish to preprocess them so that given a query $(i,j) \in [n] \times [n]$, we can determine if $S_i \cap S_j = \varnothing$ as fast as possible. There has been a line of work on this fundamental problem. Among other results, Bille, Pagh, and Pagh~\cite{BPP07} show how to compute the intersection of $m$ sets with $n$ total elements in time $O(n (\log w)^2/w + k m)$ time, where $k$ is the size of the intersection. Kopelowitz, Pettie, and Porat~\cite{KPP15} give a dynamic data structure (supporting insertions and deletions in $O(1)$ time). When all sets have cardinality at most $d$, their data structure can answer set intersection queries in $O(d \log^2 w)/w$ time. They use two levels of pairwise independent hashing to pack elements into words, using a merging routine of Albers and Hagerup~\cite{AH97} to compute intersections. Eppstein, Goodrich, Mitzenmacher, and Torres~\cite{EGMT17} showed how to remove a $\log w$ factor from these results. In particular, they presented a data structure using cuckoo hash tables and cuckoo filters to determine the intersection of two sets of total size $d$ in $O(d \log w)/w$ expected time. The behavior of their data structure is rather complex, requiring a sophisticated analysis of the ``cyclomatic number'' of random 3-uniform hypergraphs. They also assume \emph{completely random} hash functions (which are necessary in order to prove their tail bounds, see \cite{EGMT17} p.249, bullet 1) which are effectively random lookup tables. In contrast, our algorithms use only simple (linear and $O(\log n)$-wise independent) hashing and $\ac0$ word operations, requiring very little randomness (at most polylogarithmic in $n$). To summarize:

\begin{corollary} There is a data structure for the {\sc Set Intersection} query problem with $\tilde{O}(n k w)$ preprocessing time, such that all queries can be answered in $O(1 + (k \log w)/w)$ time, using only $\poly(\log n)$ randomness in preprocessing and $\ac0$ operations on $w$-bit words.
\end{corollary}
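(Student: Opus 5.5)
The plan is to carry out the same hashing-and-packing scheme behind \Cref{thm:all-edges} as a static preprocessing step, and then answer each query with a constant number of word-parallel comparison passes over packed representations of $S_i$ and $S_j$. Here $k$ bounds the set sizes (the $d$ of the problem statement).

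\textbf{Hashing.} First pick a hash function $h$ from an $O(\log n)$-wise independent family, described by $\poly(\log n)$ random bits. Using it, partition the universe into $b = \Theta(k/w)$ buckets, so that each set places $O(w)$ elements in each bucket in expectation.

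\textbf{Fingerprints and the lower bound on fingerprint length.} Inside a bucket, I would use a second linear hash to replace each element by a fingerprint of $O(\log w)$ bits. With these short fingerprints, each bucket of $S_i$ fits into $O(\log w)$ words.

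One would like fingerprints of only $O(\log w)$ bits, but that length does not make false positives negligible. The fix is that preprocessing is allowed $\tilde O(nkw)$ time. So for every set we also keep longer verification fingerprints, and whenever a packed comparison reports a candidate match we check it against them. The real element identities are stored in the same slot order, which makes this verification cheap.

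\textbf{Query via $\ac0$ comparison.} To test whether two packed buckets intersect, we need to decide if any fingerprint in one word-block equals any fingerprint in the other. I would do this by sorting each bucket's fingerprints during preprocessing. At query time we then merge with word-level parallel comparisons, in the style of Albers--Hagerup but using only $\ac0$ operations: equality of $O(w/\log w)$ packed fields, and the like. Each bucket pair should then be handled in $O(\log w)$ word operations. Over $k/w$ buckets this gives $O((k\log w)/w)$, and the additive $1$ covers small sets.

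\textbf{Overflow and Las Vegas correctness.} Preprocessing deals with buckets that overflow, meaning their load deviates from the expected $O(w)$. Chernoff-type bounds for $O(\log n)$-wise independence make such buckets rare. They can be rehashed or stored explicitly, which keeps every query within its bound. It also means the query time is worst-case, not just expected, once preprocessing succeeds. Preprocessing is Las Vegas: rehash until every load bound holds.

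\textbf{Main obstacle.} I expect the hardest step to be the all-pairs fingerprint comparison between two packed blocks of $\Theta(w/\log w)$ fields each, using only $\ac0$ word operations in $O(1)$ amortized time per word. Comparing every field against every field naively costs a factor of $w/\log w$ too much. My first attempt would be to precompute cyclic shifts of each packed word during preprocessing; this is affordable within the $\tilde O(nkw)$ budget. A query then compares aligned words, but this still seems to give an extra $w$ factor, so I would instead rely on the packed merge, which advances through both sorted sequences together.

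The corollary should then follow by reading off the data structure that sits implicitly inside the proof of \Cref{thm:all-edges}.
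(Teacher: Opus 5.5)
Your plan follows the paper's own route. The corollary is the data structure implicit in the proof of \Cref{thm:all-edges}: $O(\log n)$-wise independent bucketing into $k/w$ classes of load $O(w)$, $(3\log w+O(1))$-bit linear fingerprints stored as sorted compressed lists, the $\ac0$ packed merge of \Cref{claim:const_merge} and \Cref{claim:lin_merge} with \Cref{claim:set-intersect} at $O(\log w)$ per bucket, and verification of candidates against stored preimages. Three points need correcting.

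First, your claim that query time becomes worst-case once preprocessing succeeds does not follow. Rehashing only makes the merge cost worst-case. The verification cost depends on the fingerprint collisions between the particular queried pair $S_i,S_j$. Preprocessing never checks these, and your fingerprints bound them only in expectation: $O(1/w)$ per bucket pair, hence $O(k/w^2)$ per query. So what you actually get, as in the paper, is an expected $O(1+(k\log w)/w)$ bound for each fixed query.

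Second, your ``main obstacle'' is misdiagnosed. In the $\ac0$ Word RAM, comparing every field of one word against every field of another is a single $\poly(w)$-size $\ac0$ operation. The naive approach therefore costs only the $O(\log^2 w)$ word pairs per bucket, which is the $\log^2 w$ bound of \cite{KPP15}. The packed merge is what removes that remaining $\log w$ factor, not a factor of $w/\log w$.

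Third, you use a linear fingerprint $\langle r,x\rangle \bmod 2$ without noting that parity is not a single $\ac0$ word operation. The paper evaluates these hashes during preprocessing at up to an extra factor of $w$. That is where the $w$ in the $\tilde O(nkw)$ preprocessing bound comes from, not from storing extra verification fingerprints.
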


As in \cite{KPP15,EGMT17}, other algorithmic applications follow such as improved triangle listing algorithms; see these papers for references.

\subparagraph*{Our Results for Exact Triangle.} The {\sc Exact Triangle} problem has been widely studied as the basis for a variety of conditional lower bounds (for instance, \cite{VW13, GP18, VW18, VX20, JX23, CX24}). In particular, the hypothesis that the problem requires $n^{3-o(1)}$ time on $n$-node graphs is widely believed, as a faster algorithm would simultaneously refute multiple hardness conjectures in fine-grained complexity. We show how to shave a $\log^{3/2} n$ factor from the $n^3$ time bound in standard models, and a $w^{3/2}$ factor for the $w$-bit Word RAM.

\begin{theorem}\label{thm:exact-tri}
    There is a Las Vegas algorithm for {\sc Exact Triangle} on $n$-node graphs in $O(n^3 (\log w)^{3/2}/ w^{3/2})$ expected time. 
\end{theorem}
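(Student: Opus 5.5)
We reduce {\sc Exact Triangle} to many instances of a sparse ``all-edges''-type triangle problem, then apply the word-packing machinery behind \Cref{thm:all-edges}. Assume the graph is tripartite on $A,B,C$ with $|A|=|B|=|C|=n$ and weights $w(a,b)$, $w(b,c)$, $w(a,c)$; we seek $w(a,b)+w(b,c)=-w(a,c)$. Fix a parameter $g \approx \sqrt{w/\log w}$. Split $A$, $B$ and $C$ into groups of size $g$. This gives $(n/g)^3$ triples of groups $(A_i,B_j,C_k)$, and each triple is a small instance with $g^3$ candidate triangles. The target is to solve each triple in $O(1)$ amortized word operations, or more precisely in $O(g^3 \log w / w)$ time. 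That gives total time $O(n^3 \log w/w)\cdot$(overhead). The stated bound $n^3(\log w/w)^{3/2}$ suggests the real accounting is $(n/g)^3 \cdot O(1)$ with $g^3 = (w/\log w)^{3/2}$. So the goal is: after preprocessing, decide each group triple in $O(1)$ time.

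For a fixed pair $(A_i,B_j)$, form the set $S_{ij}=\{(a,b,\ w(a,b)) : a\in A_i,\ b\in B_j\}$. For each $c$, form $T_{j,c}=\{(b, w(b,c))\}$ and $U_{i,c}=\{(a,-w(a,c))\}$. The question is whether some $(a,b)$ satisfies $w(a,b)+w(b,c) = -w(a,c)$. To make this a set-intersection question, hash the weights with a linear (hence additive, $w(a,b)+w(b,c)\mapsto h(w(a,b))+h(w(b,c))$ up to carry) hash $h$ into $O(\log w)$ bits. Then an exact triangle becomes a collision between the sumset $\{h(w(a,b))+h(w(b,c))\}$ tagged by $a$, and $\{-h(w(a,c))\}$ tagged by $a$. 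For a group triple the relevant objects, $g^2$ edges of $A_i\times B_j$, $g^2$ of $B_j\times C_k$ and $g^2$ of $A_i\times C_k$, each take $g^2\log w \approx w$ bits. So each fits in $O(1)$ words. We precompute these packed words for all $3(n/g)^2$ group pairs, which costs $O(n^2)$ time.

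The core step is a unit-cost $\AC0$ test on three such words: does there exist $a,b,c$ with $h_{ab}+h_{bc}+h_{ac}\equiv 0$? This is a $g^3$-fold OR of equality tests on $O(\log w)$-bit sums. That is a polynomial-size, constant-depth circuit in $w$ (addition of two $O(\log w)$-bit numbers is in $\AC0$), so it is a legal word operation. Candidate triples that pass are verified against the true weights. False positives occur with probability $2^{-\Theta(\log w)}=1/\poly(w)$ per candidate, by the near-linearity of $h$ and standard analysis of linear hashing (handling the carry by checking $O(1)$ offsets). If the hash range is $w^{c}$ for a large constant $c$, the expected verification cost per group triple is $g^3/w^{c}=o(1)$. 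A false negative never happens, so the algorithm is Las Vegas.

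I expect the main obstacle to be controlling the carries and the false-positive rate of the additive hash while keeping each group triple in $O(1)$ words. There is a tension: more hash bits lower the false-positive rate but shrink $g$. I would first try the Dietzfelbinger-style linear hash $h(x)=\lfloor (\alpha x \bmod 2^{W})/2^{W-\ell}\rfloor$ with $\ell = C\log w$. It satisfies $h(x)+h(y)-h(x+y)\in\{0,1\}$ modulo $2^\ell$, so each equality test becomes two tests. A second concern is the $\AC0$ implementation of the $g^3$-way comparison, which rearranges packed fields. I would handle it by precomputing replicated layouts of each word, in the manner of the sorting subroutine, so that the comparisons become fieldwise. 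The preprocessing for this is only $\tilde O(n^2)$.
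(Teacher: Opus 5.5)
Your proposal is correct and is essentially the paper's proof. Both use groups of size $\Theta(\sqrt{w/\log w})$, edge weights hashed to $O(\log w)$ bits, and a single $\AC0$ word operation that tests all $g^3$ candidate triangles of a group triple, with $O(1)$ offsets to absorb wrap-around or carries. Positives are then verified by brute force, for $O((n/g)^3)$ total time.

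The only differences are in details. The paper hashes modulo a random prime $p\le 2^{10\log w}$, so the test is whether the sum lies in $\{T_p, p+T_p, 2p+T_p\}$; it bounds false positives by counting prime divisors and stops with ``don't know'' if too many triples fire. Your multiply-shift hash needs the carry offsets, and you bound the verification cost directly in expectation. Note that $h(x)+h(y)-h(x+y)\in\{0,-1\}$, not $\{0,1\}$, which is harmless since you check $O(1)$ offsets anyway.
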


\subparagraph*{An Attack on the 4-Cycle Problem.} Finally, we consider the notorious problem of detecting a 4-cycle in an undirected graph on $n$ nodes and $m$ edges. An $O(n^2)$-time algorithm is known~\cite{richards1985finding} and an $O(m^{4/3})$-time algorithm is known~\cite{AYZ97}, but these running times coincide when the number of edges is $\Theta(n^{1.5})$. Indeed, the ``hard case'' of 4-cycle detection is when the graph has $O(n^{1.5})$ edges: Bondy and Simonovits showed that any graph with $cn^{1.5}$ edges for sufficiently large constant $c$ must have a 4-cycle \cite{BS74}, and works due to Brown and Erd\H{o}s, R\'enyi, and S\'os gave algebraic constructions of 4-cycle free graphs that match this bound \cite{Bro66, ERS66}. There are essentially no hardness results known for the detection version of this problem, and yet no algorithmic improvements are known beyond the $O(n^2)$-time solution, which enumerates pairs of neighbors of nodes and looks for a collision. We note that by contrast, much progress has been made on 4-cycle listing. In particular, there is a $O(\min(n^2, m^{4/3}) + t)$ time algorithm \cite{AKLS22}, where $t$ is the number of 4-cycles, along with a matching lower bound under the 3SUM hypothesis \cite{ABF22, JX23}.

We give two results on 4-cycle detection. First, building on our ideas for finding triangles in sparse graphs, we reduce 4-cycle detection to a variant of the Element Distinctness problem.

\begin{definition}[Concatenated Element Distinctness]
We are given $k$ lists $L_1, \dots, L_k$, where each list has $n$ elements of $\ell$-bit strings. For every $i, j \in [k]$, element-wise concatenate them to a produce a new list $L_{i, j}$ containing $n$ elements, each of which now have $2\ell$ bits. The task is to decide whether or not for all $i \neq j$, all $n$ strings in $L_{i,j}$ are distinct.  
\end{definition}

In other words, the $k$ lists form $O(k^2)$ instances of Element Distinctness on $n$ elements, and we accept if and only if all of the generated instances have distinct elements.

\begin{theorem}\label{thm:reduction} For $T(n,k,\ell) \geq n+k+\ell$, if Concatenated Element Distinctness can be solved in $O(T(n,k,\ell))$ time, then 4-cycle detection in $n$-node graphs can be done in $O(T(n,\sqrt{n},(\log_2 n)/2))$ time.
\end{theorem}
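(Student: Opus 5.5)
The plan is to use the folklore characterization: $G$ has a 4-cycle iff there exist distinct vertices $p\neq q$ with two distinct common neighbors. I will encode each vertex $p$ as a position in $k=\sqrt n$ lists, so that a collision in some $L_{i,j}$ at positions $p\neq q$ means exactly that $p$ and $q$ share a neighbor in ``block'' $i$ and a neighbor in ``block'' $j$.

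\textbf{Preprocessing.} Identify $V$ with $[\sqrt n]\times[\sqrt n]$, writing $x=(g(x),r(x))$, where $g(x)$ is the block and $r(x)$ is an $\ell=(\log_2 n)/2$-bit string. By Bondy--Simonovits~\cite{BS74}, if $m\ge c\,n^{1.5}$ we output ``4-cycle'' immediately. So assume $m=O(n^{1.5})$; counting edges takes $O(n+m)$ time, which is dominated once we read the lists. The intended encoding sets $L_g[p]=r(x)$ when $x$ is the unique neighbor of $p$ in block $g$. To make ``unique'' true, I first apply a random permutation to the vertex labels. Then I split each $p$ into copies $p^{(1)},p^{(2)},\dots$, where each copy receives at most one neighbor from each block and the copies partition $N(p)$. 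The number of copies of $p$ is the maximum number of neighbors of $p$ in a single block. Since $m=O(n^{1.5})$, the average degree is $O(\sqrt n)$, so each block receives $O(1)$ neighbors in expectation. Summing the max-loads should give $O(n)$ copies in expectation. If not, vertices of degree $\ge C\sqrt n$ (there are $O(n)$ of them by the edge bound) get their copies charged against their degree, which is still $O(n)$ in total. Copies of the same vertex have disjoint neighborhoods, so they never certify a spurious 4-cycle. Conversely, a 4-cycle $p,a,q,b$ with $a,b$ in different blocks appears as a collision in $L_{g(a),g(b)}$ between the copies of $p$ and $q$ that hold $a$ and $b$. The case where $a,b$ lie in the same block is handled by running the reduction again under a second independent random permutation, or by a direct check using $O(\sqrt n)$ extra work per vertex.

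\textbf{Main obstacle: empty coordinates.} A copy with no neighbor in block $g$ still needs some $\ell$-bit value in $L_g$, and there are only $\sqrt n$ values available. Such filler values must never create a collision in any $L_{i,j}$ unless both coordinates are genuine neighbors. My first attempt is to fill coordinate $g$ of position $p$ with a value tied to $p$ itself (e.g., $r(p)$ shifted by $g$), so that two fillers, or a filler paired with a real neighbor, agreeing on two coordinates forces $p=q$. If that fails, I would enlarge the instance by a constant factor: use $\ell+O(1)$ bits, or $O(1)$ rounds of constant-factor-more lists, with tag bits distinguishing filler from real entries. This is absorbed in $O(T(n,\sqrt n,(\log_2 n)/2))$ under the mild growth assumption implicit in the statement.

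\textbf{Cost and conclusion.} Building the lists costs $O(n+m)$ for the real entries. The fillers are written implicitly, or in $O(nk)\le$ input-size time, which is within $T$ since $T(n,k,\ell)\ge n+k+\ell$ and the instance itself has that size. We then answer ``4-cycle'' iff Concatenated Element Distinctness rejects on some run.
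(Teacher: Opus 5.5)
Your proposal has two genuine gaps, both in the places where it departs from the paper. The first is the copy construction, which does not keep the instance at $n$ positions. The number of copies of $p$ is $\max_g |N(p)\cap B_g|$, and you have conflated the expected load of one block, which is $O(1)$, with the expected maximum load over blocks. For a vertex of degree $\Theta(\sqrt n)$ and a random partition into $\sqrt n$ blocks, this is a balls-into-bins maximum, $\Theta(\log n/\log\log n)$ with high probability. The hardest inputs, $C_4$-free graphs near Reiman's bound such as polarity graphs, have $\Theta(n)$ vertices of degree $\Theta(\sqrt n)$. So you get $\Theta(n\log n/\log\log n)$ positions. Charging copies to degree does not help, because the excess comes from load fluctuations at typical degree, not from high degree. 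This is a superconstant blowup, so no reasonable growth condition on $T$ would absorb it. In any case the statement assumes only $T(n,k,\ell)\ge n+k+\ell$, which also rules out your later appeal to absorbing a constant-factor change in $\ell$ or $k$.

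The idea you are missing is the paper's trade of exactness for size. First reduce to maximum degree $O(\sqrt n)$ via \Cref{lemma:c4-max-deg}. Then let each vertex keep a single random neighbor in each color class and discard the rest. Positions are then exactly the vertices, and every collision is still a genuine 4-cycle. Since a class contains $O(1)$ neighbors of a vertex in expectation, any fixed 4-cycle survives with constant probability, so the reduction is one-sided with constant success probability.

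The second gap is the filler scheme, which also contradicts your claim that copies never certify a spurious cycle. All copies of $p$ are empty in every block containing no neighbor of $p$, and a vertex of degree about $\sqrt n$ misses about $\sqrt n/e$ blocks. If the filler in coordinate $g$ depends only on $p$, any two copies of $p$ collide in $L_{i,j}$ for two such blocks $i\neq j$. A tag bit separating fillers from real labels does not rescue vertex-tied fillers either. If $p\neq q$ both miss block $i$ and share a neighbor $z$ in block $j$, the copies holding $z$ collide in $L_{i,j}$ unless the fillers of $p$ and $q$ in list $i$ differ. In a polarity graph all but $O(n)$ pairs of vertices have a common neighbor, and $\Theta(n)$ vertices miss a given block. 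So list $i$ would need $\Omega(n)$ distinct filler values, about $\log_2 n$ bits rather than $(\log_2 n)/2+O(1)$.

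You have identified a real subtlety here. The paper simply writes $\bot$ in empty slots and relies on its convention that $\bot$ matches nothing, which tacitly costs an extra bit and treats $\bot$-entries as never being duplicates. But your proposal neither adopts such a convention explicitly nor supplies a replacement that works.
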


Motivated by this reduction, we show how to sort $n$ items over a universe of size $2^u$ in $O(n u \log n)/w$ time in the $\ac0$ Word RAM model on $w$-bit words. Prior work either required more complex word operations such as $2^w$-size lookup tables \cite{Cha10}, or extra logarithmic factors 
\cite{AH97}. As sorting easily solves Element Distinctness, we obtain the following:

\begin{theorem}\label{thm:c4det} We can detect a 4-cycle in $n$-node graphs in $O(n^2 \log^2 n)/w$ time in the $\AC0$ Word RAM.
\end{theorem}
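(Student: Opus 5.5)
The plan is to combine \Cref{thm:reduction} with the $\ac0$ sorting algorithm. By \Cref{thm:reduction}, it suffices to solve Concatenated Element Distinctness with $k=\sqrt n$ lists, each of $n$ strings of $\ell=(\log_2 n)/2$ bits, in time $T=O(n^2\log^2 n/w)$. We may assume $w=O(n^{c})$ for some constant $c$; otherwise the bound is below the trivial input-reading cost and we argue directly in the packed representation. The total target is consistent with the instance size. There are $k^2=n$ pairs $(i,j)$, each list $L_{i,j}$ has $n$ elements of $2\ell=\log_2 n$ bits, so the instance has $n^2$ elements over universe size $2^u$ with $u=\log_2 n$.

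\textbf{Step 1: build the lists in packed form.} For each $i$, store $L_i$ packed, with $w/\ell$ entries per word. To form $L_{i,j}$ packed with $w/(2\ell)$ entries per word, interleave the packed representations of $L_i$ and $L_j$. This is a fixed bit permutation after spreading (doubling the field width). Spreading $\ell$-bit fields into $2\ell$-bit fields and shifting is a fixed wiring, hence an $\ac0$ operation on $O(1)$ words. We then OR the two spread words, with one shifted by $\ell$. So $L_{i,j}$ is produced in $O(n\ell/w)$ time, and over all pairs this costs $O(n^2\log n/w)$.

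\textbf{Step 2: sort each list.} Apply the paper's $\ac0$ sorting routine, which sorts $N$ items over a universe of size $2^u$ in $O(Nu\log N)/w$ time. Here $N=n$ and $u=\log n$, so each list costs $O(n\log^2 n/w)$. Summing over the $n$ pairs gives $O(n^2\log^2 n/w)$.

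\textbf{Step 3: detect duplicates.} A sorted packed list has a duplicate if and only if two adjacent fields are equal. We compare each packed word with a copy shifted by one field, fixing up word boundaries via the next word. Fieldwise equality of $2\ell$-bit fields, and the OR of all results, are $\ac0$ operations. This takes $O(n\log n/w)$ per list.

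The total running time is $O(n^2\log^2 n/w)$. Since this dominates $n+k+\ell$ whenever $w\le n^{o(1)}$-ish, \Cref{thm:reduction} applies.

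\textbf{Main obstacle.} The real content is the sorting routine itself, which I assume is established in the paper as stated above. For this statement, the delicate point is whether \Cref{thm:reduction} is stated for an unpacked $T$ whose overhead (reading the adjacency structure, producing the lists) exceeds $n^2\log^2 n/w$. If the reduction's own construction of the $L_i$ takes $\Omega(n^{1.5}\log n)$ time just to write them, that cost is still $\le n^2\log^2 n/w$ for $w\le \sqrt n\log n$. Larger $w$ would need to be handled by noting that the input graph itself must be read in packed form. I would first check that the lists $L_i$ in the reduction are written directly in packed form, so that the reduction's overhead is $O(n^{1.5}\log n/w+n)$.
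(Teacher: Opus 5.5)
Your proposal is correct and is essentially the paper's argument: \Cref{thm:reduction} combined with \Cref{lemma:concateltdist}. That lemma's proof forms each $L_{i,j}$ by OR-ing a precomputed upper-aligned copy of $L_i$ with a lower-aligned copy of $L_j$ (you spread on the fly instead, which is immaterial), sorts it with \Cref{claim:log2sort} in $O(n\log^2 n/w)$ time, detects adjacent duplicates via a shift, XOR and \Cref{claim:const-all-zeros}, and sums over the $n$ pairs.

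The $O(n^{3/2})$ overhead of the reduction that you flag is not addressed in the paper either, so the stated bound implicitly requires $w = O(\sqrt{n}\log^2 n)$.
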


For word size $w \leq \log^2 n$, this result is suboptimal, as we already know an $O(n^2)$-time algorithm when $w=\Theta(\log n)$. Nevertheless, it was not obvious how to improve over $O(n^2)$ time even for large word sizes: the known algorithms involve making $O(n^2)$ unstructured probes into a table, enumerating all pairs of neighbors of a given node to find a common pair of neighbors. We believe the reduction of \Cref{thm:reduction} may be useful for future work.

\subparagraph*{Sublinear Time 4-Cycle Detection in Denser Graphs.} Finally, we observe that 4-cycles can be found surprisingly quickly in graphs that have slightly more than $n^{1.5}$ edges. In fact, they can be found in \emph{sublinear time} in the number of edges. 

\begin{theorem}\label{thm:sublinear}
There is a Las Vegas algorithm that, given a graph with $n$ vertices and at least $n^{3/2 + \delta}$ edges for $\delta \in (0, \frac{1}{2})$, returns a 4-cycle and runs in expected $O(n^{5/4-\delta/2})$ time. 
\end{theorem}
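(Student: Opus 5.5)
The plan is a birthday-paradox argument on \emph{wedges}, i.e.\ paths $a - v - b$ of length two. A 4-cycle is exactly two wedges with the same unordered endpoint pair $\{a,b\}$ and distinct middle vertices. The graph is dense enough that there are far more wedges than endpoint pairs, so a modest random sample of wedges should already contain such a collision.

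\textbf{Step 1 (counting and anti-concentration).} Let $m \ge n^{3/2+\delta}$, so the average degree is $\bar d \ge 2n^{1/2+\delta}$. By Cauchy--Schwarz the number of wedges is
\[
N = \sum_v \binom{\deg v}{2} = \Omega(m^2/n) = \Omega(n^{2+2\delta}).
\]
Let $w_p$ be the number of wedges with endpoint pair $p$. Then
\[
\sum_p w_p(w_p-1) \ge \frac{N^2}{\binom n2} - N = \Omega\bigl(N^2/n^2\bigr),
\]
because $N \gg n^2$. So a constant fraction of the "collision mass" consists of pairs of wedges that share endpoints and have distinct middles, and each such pair gives a 4-cycle.

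\textbf{Step 2 (regularization).} Sampling wedges exactly uniformly needs degree-weighted vertex sampling, which is expensive in sublinear time. I would instead fix a dyadic degree class $D$ that carries a constant fraction of the $\sum_v \deg(v)^2$ mass; this costs $O(\log n)$ guesses, or a random choice of class. Within the class, I sample $k$ random vertices and, for each, $s \le D$ random neighbors. This produces about $ks^2$ wedges, each close to uniform among the wedges centered in that class.

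\textbf{Step 3 (finding the collision and certifying it).} I insert the endpoint pairs of the sampled wedges into a hash table and check for a repeated pair with distinct centers. The expected number of good collisions is $\Omega\bigl((ks^2)^2 \cdot \sum_p w_p(w_p-1)/N^2\bigr)$. Stopping and restarting on failure makes the algorithm Las Vegas, since any reported cycle is verified explicitly.

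\textbf{Step 4 (parameters).} The running time is $\tilde O(k s^2)$ for the hashing plus $O(ks)$ for sampling. I will choose $k$ and $s$, subject to $s \le D$ and the requirement that the class contains enough vertices, so that the expected collision count is $\Omega(1)$. I then balance against the cost of locating a vertex in class $D$: a uniform random vertex is in the class with probability about $|V_D|/n$, which adds roughly $n/|V_D|$ per sample. Optimizing over the worst-case class, where $|V_D| D^2 \approx m^2/n$, is where I expect the exponent $5/4-\delta/2$ to come from. This is roughly $n^{1/4}$ times the geometric mean $\sqrt{n \cdot n^{1-\delta}}$ of the two competing costs.

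\textbf{Main obstacle.} The hard part is Step 2 together with the second-moment analysis. The sampled wedges are not independent: wedges through the same center are correlated, and same-center collisions are useless. So I must show that cross-center collisions dominate and that the variance is controlled; a Chebyshev bound on the number of good colliding pairs should suffice. I would first try the clean regular case, where all degrees are about $\bar d$, to check that the exponent comes out as $5/4-\delta/2$. Only after that would I extend the argument to general degree classes via the dyadic decomposition.
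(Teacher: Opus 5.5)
\paragraph{The gap.} Your birthday-paradox idea on wedges is viable, but Steps~2--4 contain a genuine gap. You never fix $k$ and $s$, and the exponent in Step~4 is matched to the target rather than derived. More seriously, the rule you propose fails once wedges share centers. That rule is: pick $k,s$ so that the expected number of good collisions is $\Omega(1)$, then apply Chebyshev.

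Here is a concrete failure. Let $G$ be a disjoint union of $n/(2D)$ copies of $K_{D,D}$ with $D=2n^{1/2+\delta}$. Then $m\ge n^{3/2+\delta}$, and all degrees are equal, so there is a single degree class. Take $k=2$ and $s=\sqrt{n}\le D$.
\begin{itemize}
\item Your estimate $(ks^2)^2\sum_p w_p(w_p-1)/N^2$ evaluates to $\Theta(n/D)\gg 1$, and the true expectation is of the same order.
\item However, a good collision requires the two sampled centers to lie on the same side of the same copy. That happens with probability only $\Theta(D/n)=o(1)$.
\end{itemize}
The expectation is carried by a rare event, so no Chebyshev bound can hold. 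In this graph you would instead need about $\sqrt{n/D}$ centers with $\sqrt{D}$ neighbors each. The right balance between $k$ and $s$ depends on how common neighborhoods cluster, which the algorithm does not know. Choosing (or searching over) $(k,s)$ and proving the matching second-moment bound is the whole difficulty, and the proposal leaves it open.

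\paragraph{A direct fix.} The detour through Step~2 is also unnecessary, because its premise is wrong. Since $n^{5/4-\delta/2}>n$ for every $\delta<1/2$, you can afford to query all $n$ degrees once and build prefix sums of $d_v(d_v-1)$. You can then draw i.i.d.\ exactly uniform wedges in $O(\log n)$ time each: choose a center by binary search, then two distinct random neighbors.

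Let $P$ be the induced distribution on endpoint pairs. Your Step~1 gives $\|P\|_2^2\ge 1/\binom{n}{2}$. The probability that two samples form a good collision is $q=\|P\|_2^2-1/N=(1-o(1))\|P\|_2^2$. Take $t=\lceil 2/\|P\|_2\rceil=O(n)$ samples, and let $X$ be the number of good colliding pairs among them. Then $\E[X]=\Omega(1)$. Using $\|P\|_3\le\|P\|_2$, you also get $\E[X^2]\le \E[X]+t^3\|P\|_3^3+\E[X]^2=O(1)$. Paley--Zygmund therefore gives a verified 4-cycle with constant probability per round of $O(n)$ samples. This yields $O(n\log n)$ expected time, which implies the stated bound and in fact beats it.

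\paragraph{The paper's route.} The paper argues structurally instead. By peeling, the $(\sqrt{n}+1)$-core has $\Omega(\sqrt{m})=\Omega(n^{3/4+\delta/2})$ vertices, so a uniform random vertex lands in it after $O(n^{1/4-\delta/2})$ expected trials. For a core vertex $v$, read $\sqrt{n}+1$ neighbors of each of $\sqrt{n}+1$ neighbors of $v$. This costs $O(n)$ queries and yields more than $n$ two-paths from $v$, so pigeonhole certifies a 4-cycle. The total is $O(n\cdot n^{1/4-\delta/2})$.
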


Therefore, on graphs with $\eps n^{3/2}$ edges where $\eps > 0$ is a small constant, the best known method for detecting (and finding) a 4-cycle remains $O(n^2)$ time. However, on graphs with $\eps n^{3/2 + \delta}$ edges for any positive $\delta$, the time complexity of 4-cycle surprisingly drops to $O(n^{5/4-\delta/2})$ time. 

A classic supersaturation result due to Erd\H{o}s and Simonovits \cite{ES83} states that a graph with $m \gg n^{3/2}$ edges has at least $\Omega((m/n)^4))$ 4-cycles, so there is a natural \emph{random sampling} algorithm for finding 4-cycles: simply try uniform random 4-tuples of nodes! Our algorithm outperforms the random sampling algorithm when the number of edges is not significantly larger than $n^{3/2}$. For instance, when $m = \Theta(n^{3/2+\delta})$, supersaturation guarantees $\Omega(n^{2+4\delta})$ 4-cycles, so random sampling takes expected $O(n^{2-4\delta})$ time. This is strictly slower than our method when $\delta \in \left(0, \frac{3}{14}\right)$. Combining the two approaches, we obtain:

\begin{corollary}
There is a Las Vegas algorithm that, given a graph with at least $n^{3/2 + \delta}$ edges for $\delta \in (0, \frac{1}{2})$, returns a 4-cycle and runs in expected $O(n^{\min\{5/4-\delta/2,2-4\delta\}})$ time. 
\end{corollary}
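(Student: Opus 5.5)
The corollary will follow by running two Las Vegas procedures in parallel and stopping as soon as either one returns a 4-cycle. We interleave their steps one at a time, so the combined expected running time is at most twice the minimum of the two expected times. The first procedure is the algorithm of \Cref{thm:sublinear}, which runs in expected $O(n^{5/4-\delta/2})$ time on graphs with at least $n^{3/2+\delta}$ edges. The second is the random sampling algorithm, analyzed via the Erd\H{o}s--Simonovits supersaturation bound. Both are Las Vegas: neither ever outputs a wrong answer, since any claimed 4-cycle is verified before it is returned. Hence the combination is also Las Vegas. Its expected running time is $O(\min\{T_1,T_2\})$, where $T_1,T_2$ are the two expected times. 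This holds because the interleaved process halts no later than twice the halting time of either component, and $\E[\min(X,Y)]\le \min(\E X,\E Y)$.

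The main work is the analysis of random sampling. By supersaturation, a graph with $m \ge n^{3/2+\delta} \gg n^{3/2}$ edges contains $\Omega((m/n)^4)=\Omega(n^{2+4\delta})$ 4-cycles. The algorithm repeatedly samples a uniform ordered 4-tuple $(a,b,c,d)$ of vertices and checks the edges $ab, bc, cd, da$ plus distinctness. To make each trial $O(1)$, we assume adjacency queries in $O(1)$ time. Any labeled 4-cycle gives $8$ ordered tuples out of $n^4$, so each trial succeeds with probability $p=\Omega(n^{2+4\delta}/n^4)=\Omega(n^{-2+4\delta})$. The number of trials is geometric, so the expected time is $O(1/p)=O(n^{2-4\delta})$.

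The step I expect to need care is the adjacency-query assumption. Building an adjacency structure for all $m$ edges would cost $\Omega(m)$ time, which exceeds both bounds. The first approach to try is to assume, as in \Cref{thm:sublinear}, an input model with $O(1)$-time adjacency tests (an adjacency matrix, or hashed adjacency lists given as input). If only adjacency lists are available, a different sampler is needed. Pick a uniformly random vertex $a$ and two random neighbors of it, then two random neighbors of those, and test for closure. Comparing this to the uniform-tuple bound would require a degree-weighted counting argument, so I would rely on the same access model as \Cref{thm:sublinear}.

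Given that model, taking the minimum of the two expected times yields the claimed bound $O(n^{\min\{5/4-\delta/2,\,2-4\delta\}})$. We also note where each term dominates. Solving $5/4-\delta/2 = 2-4\delta$ gives $\delta=3/14$. So the sublinear algorithm is better for $\delta<3/14$ and sampling is better beyond that, consistent with the remark preceding the corollary.
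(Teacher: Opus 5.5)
Your proposal is correct and is essentially the paper's argument. The paper also obtains the corollary by combining the algorithm of \Cref{thm:sublinear} with uniform random 4-tuple sampling, analyzed through the Erd\H{o}s--Simonovits supersaturation bound of $\Omega((m/n)^4)=\Omega(n^{2+4\delta})$ 4-cycles to get $O(n^{2-4\delta})$ expected time, and then takes the better of the two bounds. One small correction: the access model of \Cref{thm:sublinear} consists of neighbor and degree oracles, not $O(1)$-time adjacency tests, so the combined algorithm needs pair queries in addition to those oracles. The paper leaves this assumption implicit when it says to ``simply try uniform random 4-tuples of nodes.''
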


\iffullversion{}
\else {\bf \emph{Note: some proofs are omitted due to space constraints.}}
\fi

\subsection{Intuition}

Let us give some intuition for the new ideas and techniques that go into our results, starting with our results for triangles in sparse graphs. Fundamentally, triangle finding in sparse graphs amounts to computing a series of intersections of sparse sets. While there is a considerable literature on data structures for set intersection (for example \cite{BPP07, DBLP:journals/tcs/CohenP10, DBLP:journals/pvldb/DingK11, KPP15, EGMT17}), prior work has focused on complex hashing schemes with complex word operations. 

Our approach reduces triangle problems in sparse graphs to problems which are more easily ``word-packable'', and can be cleanly implemented in simple ($\ac0$) word operations. For example, in \Cref{thm:warmup} we show how to randomly embed set intersections of sets of size $k$ into the \emph{equality inner product problem}, in which we are given vectors $u,v$ of length $O(k)$ with entries in $[n]$, and we wish to find a component $i$ such that $u_i = v_i$. Since equality inner products are easy to pack into word operations, this already suffices to detect triangle in sparse graphs. To solve the all-edges case, we partition the universe of $n$ nodes into $O(n^{\delta}/\log n)$ components using $O(\log n)$-wise independent hashing, and argue that each vertex neighborhood has $O(\log n)$ nodes in each component. We can then compute the $O(\log n)$-set intersections quickly by adapting a randomized communication protocol for set intersection~\cite{DKS12}. We use similar ideas to get our reduction from 4-cycle to {\sc Concatenated Element Distinctness}.

For the {\sc Exact Triangle} problem, we can do similar tricks. However we can obtain an improved speedup (from a $\log n$ factor to a $\log^{3/2} n$ factor) by exploiting the density of the graph. In particular, we can pack small $O(\sqrt{\log n})$-size subgraphs of the graph in our words, along with linear hash values of weights.

In the process of proving our results on 4-cycle, we present an improved ``packed sorting'' algorithm. In this problem, we wish to sort a list of $n$ items from the universe $[\poly(n)]$ where the unsorted list is provided in $(n \log n)/w$ words, each word containing $O(w/\log n)$ items. Although it was known how to sort a packed list in $O(n \log^2 n)/w$ time using arbitrary word operations, we show how to do it using only $\ac0$ word operations. The key is that, once the sublists within each word are sorted themselves, all that remains is to merge the sublists. We show how to perform merges in $\AC0$, by exploiting the sorted order to determine in parallel the ranks of all items in the merge.

Although our results only give ``minor'' running time improvements (polylog-shaving and log-log shaving) over the prior best-known algorithms, we achieve these results with \emph{simple} techniques, by viewing these fundamental and well-studied problems in a different light. It is reasonable to believe that applying more sophisticated methods could lead to stronger improvements.

\section{Preliminaries} \label{sec:prelims}

\subparagraph*{Model of Computation.} We use the $\ac0$ Word RAM model of computation with word size $w$. This means that we \emph{restrict} ourselves to word operations that can be computed by a $\poly(w)$-size circuit of constant depth with AND and OR gates of unbounded fan-in as well as NOT gates, and assume that any such operation can be computed in constant time~\cite{H98, AMT99, BDP08}. Most standard Word RAM instructions can be implemented in the $\ac0$ Word RAM, including addition and subtraction modulo $2^w$, bitwise boolean operations, comparisons, and shifts \cite{SV84, CSV84}. The only standard operation not supported in this model is multiplication. For wordsize $w \gg \log n$, we explicitly disallow $2^w$-size lookup tables, which would allow us to compute \emph{any} function on $w$ bits: we only allow $\ac0$ functions to be computed on $w$-bit words in constant time.

\subparagraph*{Graphs.} Unless otherwise stated, all graphs are undirected and unweighted.

\begin{theorem}[Reiman] \label{thm:reiman}
    If $G$ is a four-cycle free graph on $n$ nodes, it has at most $\frac{n}{4}\left(1 + \sqrt{4n - 3}\right)$ edges.
\end{theorem}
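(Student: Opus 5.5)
The plan is the standard double-counting argument over pairs of neighbors (``cherries''). Let $G$ have $n$ nodes, $m$ edges, and degrees $d_1,\dots,d_n$, so $\sum_v d_v = 2m$. A path $x - v - y$ of length two is determined by its middle vertex $v$ and an unordered pair $\{x,y\}$ of neighbors of $v$. Since $G$ has no $4$-cycle, two distinct vertices $v \neq v'$ cannot share the same pair $\{x,y\}$: otherwise $x,v,y,v'$ would form a $C_4$. Therefore the number of such paths is at most the number of unordered pairs of vertices:
\[
\sum_{v} \binom{d_v}{2} \le \binom{n}{2}.
\]

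Next, bound the left-hand side from below using convexity of $t \mapsto \binom{t}{2}$, i.e.\ Jensen's inequality or Cauchy--Schwarz in the form $\sum_v d_v^2 \ge (2m)^2/n$. This gives
\[
\frac{1}{2}\left(\frac{4m^2}{n} - 2m\right) \le \frac{n(n-1)}{2},
\]
which simplifies to the quadratic inequality $4m^2 - 2mn - n^2(n-1) \le 0$ in $m$.

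Finally, solve the quadratic: $m \le \frac{2n + \sqrt{4n^2 + 16n^2(n-1)}}{8} = \frac{n}{4}\left(1 + \sqrt{4n-3}\right)$, which is exactly the stated bound. There is no real obstacle here. The only points needing care are the injectivity claim, which requires $x \ne y$ (automatic, since we count unordered pairs of distinct neighbors), and keeping the exact form of the Cauchy--Schwarz step rather than an asymptotic estimate, so that the constant comes out exactly as $\frac{n}{4}(1+\sqrt{4n-3})$.
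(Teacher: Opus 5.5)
Your proof is correct. The injection of two-paths $x - v - y$ into unordered vertex pairs works because $x,v,y,v'$ are necessarily distinct. The Cauchy--Schwarz bound $\sum_v d_v^2 \ge 4m^2/n$ holds, and the root computation $\sqrt{4n^2+16n^2(n-1)} = 2n\sqrt{4n-3}$ gives exactly $\frac{n}{4}\left(1+\sqrt{4n-3}\right)$.

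The paper does not prove this statement; it quotes it as Reiman's classical theorem. Your double-counting argument is the standard proof of that result.
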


\subparagraph*{Chernoff lower bounds for $k$-wise independence.} We will also use the following adaptation of Chernoff bounds to $k$-wise independent random variables:

\begin{theorem}[\cite{DBLP:journals/siamdm/SchmidtSS95}] \label{thm:chernoff}
Let $k$ be a positive integer, and let $X$ be the sum of $0/1$ valued $k$-wise independent random variables, each with expectation $\mu$. For $\delta \geq 1$ and $k \leq \lfloor \delta \mu/e^{1/3}\rfloor$, we have $\Pr[X \geq (1+\delta)\mu] \leq e^{-k/2}$.
\end{theorem}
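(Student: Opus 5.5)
The plan is the standard binomial-moment argument of Schmidt, Siegel and Srinivasan, which uses only $k$-wise independence. Here I read $\mu$ as $\E[X]$; each variable $X_i$ has mean $p_i$ with $\sum_i p_i = \mu$. Write $X=\sum_{i=1}^N X_i$.

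\textbf{Step 1: the $k$-th binomial moment.} Consider $\binom{X}{k}=\sum_{|S|=k}\prod_{i\in S}X_i$, which is a sum of degree-$k$ monomials in the $X_i$. By $k$-wise independence, each term has expectation $\prod_{i\in S}p_i$. This gives
\[
\E\left[\binom{X}{k}\right]=e_k(p_1,\dots,p_N)\le \binom{N}{k}\left(\frac{\mu}{N}\right)^k\le \frac{\mu^k}{k!},
\]
where the middle inequality is Maclaurin's inequality (the elementary symmetric polynomial is maximized at equal $p_i$ for a fixed sum).

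\textbf{Step 2: Markov on a monotone function.} For integer $x \geq 0$, the map $x\mapsto\binom{x}{k}$ is nonnegative and nondecreasing. Set $t=\lceil(1+\delta)\mu\rceil$. Markov's inequality then gives
\[
\Pr[X\ge(1+\delta)\mu]\le \frac{\mu^k/k!}{\binom{t}{k}}=\prod_{i=0}^{k-1}\frac{\mu}{t-i}\le \exp\left(-\sum_{i=0}^{k-1}\ln\left(1+\delta-\tfrac{i}{\mu}\right)\right).
\]
Each factor is well defined and positive, since $k\le\delta\mu e^{-1/3}<t$.

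\textbf{Step 3: the constant (main obstacle).} It remains to show that the sum in the exponent is at least $k/2$ whenever $\delta\ge1$ and $k\le\delta\mu e^{-1/3}$. Each summand is decreasing in $i$ and increasing in $\delta$, so the worst case is $\delta=1$ with $k$ as large as allowed. I would compare the sum to an integral, which gives a lower bound of
\[
k\cdot\frac{1}{a}\int_0^a\ln(2-x)\,dx, \qquad a=e^{-1/3}.
\]
A rough numerical evaluation of this average comes out at about $0.49$, so a crude integral bound alone is borderline. I expect this constant-chasing to be the delicate part. The slack has to come from three places: the left-endpoint (right-Riemann-sum) error, the rounding up to $t=\lceil(1+\delta)\mu\rceil$, and the strict Maclaurin gap when $k$ is a nontrivial fraction of $N$. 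If these are still insufficient, I would fall back to the cleaner route of comparing with the $k$-th moment of a fully independent sum, as SSS do. That comparison is valid because $\E[X^k]$ depends only on $k$-wise marginals. One then bounds $\E[X^k]/((1+\delta)\mu)^k$ by a Stirling estimate, and the threshold $e^{1/3}$ is exactly what makes the resulting exponent reach $-k/2$.

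Since the paper uses the theorem only as a black-box tail bound for $O(\log n)$-wise independent hashing, it would also be acceptable to cite \cite{DBLP:journals/siamdm/SchmidtSS95} for the exact constant. Steps 1 and 2 already yield a bound of the form $e^{-\Omega(k)}$, which is all the later load-balancing arguments need.
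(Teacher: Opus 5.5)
\textbf{There is a genuine gap in Step 3.} The paper gives no proof of this statement; it is quoted from Schmidt--Siegel--Srinivasan, so your argument has to stand on its own. Steps 1 and 2 are correct, and reading $\mu$ as $\E[X]$ is the intended meaning. But Step 3 is not merely delicate: the inequality it needs is false, so the binomial-moment route cannot reach $e^{-k/2}$.

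Consider the following instance:
\begin{itemize}
\item take $\delta=1$ and $\mu$ an integer, so $t=2\mu$ and rounding gives nothing;
\item take $k=\lfloor e^{-1/3}\mu\rfloor$;
\item take i.i.d.\ Bernoulli$(\mu/N)$ variables with $N\to\infty$, so $\E\bigl[\binom{X}{k}\bigr]=\binom{N}{k}(\mu/N)^k\to\mu^k/k!$ and the Maclaurin step is asymptotically tight.
\end{itemize}
Your Markov bound then tends to $\exp\bigl(-\sum_{i<k}\ln(2-i/\mu)\bigr)$. Since $\ln(2-x)$ is monotone, $\sum_{i<k}\ln(2-i/\mu)=\mu\int_0^{k/\mu}\ln(2-x)\,dx+O(1)\approx 0.488\,k+O(1)$, and the Riemann error lies between $0$ and $\ln 2$. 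Concretely, $\mu=1000$ and $k=716$ give a sum of about $349.5<358=k/2$.

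None of your three sources of slack is of order $k$, so Markov applied to $\binom{X}{k}$ yields only about $e^{-0.488k}$ here. Your fallback is worse. For the same Poisson-like instance, the mass near $X\approx 1.58\mu$ alone forces $\E[X^k]/(2\mu)^k\gtrsim e^{-0.43k}$. So $e^{1/3}$ is not the threshold at which a raw-moment Stirling estimate reaches $-k/2$.

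\textbf{The missing idea is to apply Markov to an even \emph{central} moment.} For even $k$, $\Pr[X\ge(1+\delta)\mu]\le\E[(X-\mu)^k]/(\delta\mu)^k$. The quantity $\E[(X-\mu)^k]$ is again determined by the $k$-wise marginals, so it equals its fully independent value and can be bounded there. Centering is what gains the constant. In the instance above one gets roughly $\E[(X-\mu)^k]/\mu^k\approx e^{-0.54k}$, which beats $e^{-k/2}$ with room to absorb lower-order factors. This is the SSS route, and it is why their theorem actually reads $e^{-\lfloor k/2\rfloor}$: $k$ must be rounded down to an even number.

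Your closing remark is fair for the paper's purposes. Steps 1--2 already give $\Pr[X\ge(1+\delta)\mu]\le(2-e^{-1/3})^{-k}=e^{-\Omega(k)}$, which suffices for the load-balancing step after adjusting constants. But that is a weaker statement, not a proof of the one quoted.
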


\section{Compressed List Operations in the AC0-RAM} \label{sec:wordtricks}
At a high level, our algorithms use various hashing schemes to reduce triangle and 4-cycle to problems on compressed lists, in particular, equality product, set disjointness, and element distinctness. We then take advantage of word-level parallelism and known techniques for designing $\AC0$ circuits to solve these problems faster, giving us word size factor savings. We begin by detailing these compressed list operations.

\begin{definition}[Compressed list representation]
    Let $L$ be a list of $k$ elements from $[2^{\ell}]$, where $\ell \leq w$. We store such $L$ in \emph{compressed list representation} by packing $\lfloor w/\ell \rfloor$ contiguous list elements into a single word, for an overall packed length of $O(k\ell/w)$ words. 
\end{definition}

In the following, we let $\bot$ denote a special character that does not match any valid element. This can be easily implemented in various ways; e.g., we can extend the $\ell$-bit encoding of every string to $\ell+1$ bits, where all ``valid'' $\ell$-bit elements start with a zero, and $\bot$ equals all-ones (for example).

\subsection{Sorting Compressed Lists}

We begin by describing a sorting algorithm on compressed lists with $\AC0$ word operations, loosely inspired by a work of Chan \cite{Cha10} who allowed \emph{arbitrary} word operations. Given two compressed sorted lists $L_0$ and $L_1$, we define the \emph{merge step of $L_0$ and $L_1$} to be the task of implementing the standard linear-time merge of $L_0$ and $L_1$ until one of the lists $L_b$ is empty. The output of a merge step is the merged list obtained so far, along with a $b\in\{0,1\}$ and the remaining content of the non-empty list $L_b$. We give a procedure to perform this operation on single words in compressed list representation, which we will use as a subroutine in the overall merge. In particular, at every step of the high-level merge, we will run the single-word merge on the leftover word from $L_b$ with the next full word from $L_{\overline{b}}.$ 

\begin{lemma} \label{claim:const_merge} 
    Let $w$ be the word size. Given compressed sorted lists $L_0$ and $L_1$ where both lists are stored in a single word, the merge step of $L_0$ and $L_1$ can be implemented in constant time with $\ac0$ word operations. We assume list elements are in the universe $[2^u]$ for $u \leq w$. 
\end{lemma}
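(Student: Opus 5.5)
We construct a single constant-depth, polynomial-size circuit that takes the two words (each holding at most $k=\lfloor w/u\rfloor$ sorted fields of $u$ bits, padded with $\bot$) and outputs the merged prefix, the bit $b$, and the leftover of $L_b$. The idea is to compute every element's final rank in parallel and then route each element to its slot. Write $x_1\le\dots\le x_p$ for the elements of $L_0$ and $y_1\le\dots\le y_q$ for those of $L_1$, with $p,q\le k$.

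First, we compute all $pq\le w^2$ pairwise comparisons $[x_i\le y_j]$ in parallel. Comparing two $u$-bit numbers is in $\AC0$ with $O(u^2)$ gates, so this stage costs $\poly(w)$ gates in constant depth. Ties are broken in favor of $L_0$, so every element gets a distinct position.

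Next, we obtain ranks from these comparisons without counting. Because both lists are sorted, the comparison row $([x_i\le y_j])_j$ is monotone: it is a block of zeros followed by a block of ones. Hence the number of $y$'s that precede $x_i$ equals the first index $j$ with $x_i\le y_j$, minus one. We compute this in one-hot form as $r_{i,j}=[x_i\le y_j]\wedge\neg[x_i\le y_{j-1}]$, with suitable boundary cases. The final position of $x_i$ is then $i+(j-1)$. Positions of the $y_j$ are symmetric, using strict comparisons. We encode the target position $t$ of each element by the indicator bits $[\text{pos}(x_i)=t]$, which are ORs over $j$ of $r_{i,j}$ with $i+j-1=t$. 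These are constant-depth formulas over $O(k^2)$ inputs, and the index arithmetic is fixed wiring. This monotonicity trick is the key point: it lets us avoid general counting (majority), which is not in $\AC0$.

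Then, to decide $b$ and the cut point, we use the fact that the merge step stops once one list is exhausted. Let $x_p$ and $y_q$ be the last valid elements, located by one-hot detection of the last non-$\bot$ field. If $x_p\le y_q$, then $L_0$ empties first and $b=1$. The merged output consists of all elements whose position is at most $\text{pos}(x_p)$, and the leftover is the set of $y_j$ with $y_j>x_p$. The other case is symmetric. Every quantity here is an OR or AND of already-computed bits.

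Finally, we route the elements. Output field $t$ is $\bigvee_i([\text{pos}(x_i)=t]\wedge x_i)\vee\bigvee_j([\text{pos}(y_j)=t]\wedge y_j)$, applied bitwise, and is masked to $\bot$ beyond the cut. The leftover word is built the same way, by shifting the indices of the remaining $y_j$ (or $x_i$) to start at field 1, which again uses one-hot shifts. The merged output has up to $2k$ fields, so it may occupy two words. The operation therefore returns a constant number of words, which is allowed. The total circuit has $\poly(w)$ size and constant depth, so it is a single unit-time $\AC0$ word operation, or a constant number of them.

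The main obstacle is ensuring that no step secretly needs counting or addition of variable-size quantities. The rank computation handles this through the monotone structure plus the one-hot encoding, and I would check the handling of boundaries and $\bot$ padding carefully there.
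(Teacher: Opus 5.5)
Your proposal is correct and takes essentially the same route as the paper. Both compute each element's merged rank in parallel by using sortedness to detect the unique transition point in its comparison row: the paper's test $L_1[j] < L_0[i] < L_1[j+1]$ is exactly your one-hot $r_{i,j}$. Both also break ties toward $L_0$, route elements by OR-ing over the unique match, and decide $b$ by comparing the last valid elements.

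One small boundary fix is needed. Because ties favor $L_0$, when $x_p \le y_q$ the leftover must be the $y_j$ with $y_j \ge x_p$, equivalently those with $\mathrm{pos}(y_j) > \mathrm{pos}(x_p)$. As written ($y_j > x_p$), an element of $L_1$ equal to $x_p$ lands in neither the merged prefix nor the leftover, so it is lost.
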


\begin{proof}[Proof of \Cref{claim:const_merge}]
     Since $L_0$ and $L_1$ are stored in a single word, we may assume there are at most $k = \lfloor w/u \rfloor$ elements in each. If the lists have fewer than $k$ elements, we expect the entries to appear packed consecutively in the upper bits of the word, and the remainder of the word to be padded with $u$-bit $\bot$ entries which are treated as ``infinity'', larger than all other entries. 
    
    To sort $L_0 \cup L_1$, observe that it suffices to determine, for every element $x$ in $L_0$, the number of elements in $L_1$ that precede $x$ in sorted order, and for every element $x'$ in $L_1$, the number of elements in $L_0$ that precede $x'$ in sorted order. Using an $\AC0$ circuit, exploiting the fact that $L_0$ and $L_1$ are already sorted, we can check all possible values for these numbers in parallel. We will perform comparisons as if all entries in $L_0$ have a trailing 0 appended to them, and all entries in $L_1$ have a trailing 1. This means that there does not exist any element that appears in both $L_0$ and $L_1$, which has the effect of breaking ties in favor of $L_0$. 

    To compute the $k$th value of the merged output, we check the following. For every $i \in [|L_0|]$, compute $j = k - i.$ Then, check the following conditions: 
    \begin{enumerate}
        \item[(1)] $L_0[i] > L_1[j]$ and
        \item[(2)] either $L_0[i] < L_1[j+1]$ or $L_1[j] = \bot$.
    \end{enumerate}
    If the AND of (1) and (2) holds, we output $L_0[i],$ else output 0. We then OR together all of these gadgets for all $i \in [|L_0|],$ and also OR these results with the analogous computation taking $i \in [|L_1|].$ If the condition is satisfied (say for $i \in [|L_0|]$), this means that there are $j+1$ elements in $L_1$ that are strictly smaller than $L_0[i],$ and the rest of the elements are strictly larger. Since elements in $L_0$ cannot be equal to those in $L_1$, there is only one such value of $j$ per element. Additionally, since we enforce that our merge step is stable, the (unique) correct location in the merged list for this element is $i+j,$ as desired. 
    
    As mentioned in \Cref{sec:prelims}, the addition and comparisons can be done in $\AC0$. By construction, at most one comparison gadget fires a nonzero index, so we can simply OR them all. Thus, we can run these checks for all $|L_0| + |L_1|$ list elements in parallel, constructing our merged list index by index. 

    All that remains is to determine which list will be left empty following the merge step. To do this, let $i$ and $j$ be the last elements of $L_0$ and $L_1$. If $i < j$, then $L_0$ is the newly emptied list ($b=0$), otherwise $L_1$ is emptied ($b=1$). Let $m$ be the minimum of $i$ and $j$. 
    We place all elements whose sorted rank is in $[m]$ into at most two words, which is the merged portion. We can pad the second word with $\bot$. We place the remaining non-$\bot$ elements (all from list $L_{\overline{b}}$) into another word (again padding with $\bot$), and return $b$, completing the merge step.

    As mentioned above, each of these gadgets is constant-depth, and we compose them in parallel. For every element, we perform at most $\max(|L_0|, |L_1|)$ comparisons, so the circuit size is at most quadratic in the input size, and thus the entire construction is in $\ac0.$   
\end{proof}

\begin{claim} \label{claim:lin_merge} Given compressed sorted lists $L_0$ and $L_1$ with $u$-bit entries and at most $k$ elements each, a merge of $L_0$ and $L_1$ can be implemented in $O(k u/w)$ time with $\ac0$ word operations.
\end{claim}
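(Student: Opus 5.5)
The plan is to simulate the textbook linear-time merge at word granularity, using \Cref{claim:const_merge} as a constant-time step that consumes at least one full word of input per call. Both $L_0$ and $L_1$ occupy $O(ku/w)$ words, since each word packs $\lfloor w/u\rfloor$ entries. I would maintain three pieces of state:
\begin{itemize}
\item a pointer into each list, to the next unread word;
\item a single ``carry'' word, together with the bit $b$ indicating which list it came from, holding the leftover (sorted, $\bot$-padded) entries not yet output;
\item an output buffer, to which full words are appended.
\end{itemize}
Initially the carry is the first word of $L_0$ with $b=0$.

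Each iteration reads the next full word of $L_{\overline b}$ and applies the merge step of \Cref{claim:const_merge} to the carry and this new word. This costs $O(1)$ time and returns three things: the merged portion (at most two words, possibly not full), a new bit $b'$, and a new carry word of leftovers from $L_{b'}$.

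Correctness rests on the standard merge invariant. Every element in the carry is at least every element already output, and all unread words of both lists are sorted and lie after their own list's earlier words. When the merge step empties the word from one list, the leftovers of the other are each larger than the last element emitted from the emptied word. The next element of the emptied list is at least that last element, so the leftovers cannot yet be safely emitted, and we must fetch the next word from that list. If that list has no more words, we instead flush the carry and the remaining words of the other list directly to the output.

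The running time follows because each iteration consumes one fresh input word and does $O(1)$ work. The number of iterations is therefore at most the total number of input words, $O(ku/w)$.

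The main obstacle I expect is keeping the output compressed. The merged portions returned each step are partial words, and appending them naively leaves holes. I would fix this by keeping a partially filled output word plus a fill count $c \le \lfloor w/u \rfloor$. To append a chunk, shift it by $cu$ bits and OR it in, using shifts, which are $\AC0$. Any overflow becomes the new partial word, and a word is emitted whenever the partial word fills. Each append is $O(1)$ word operations and each step appends $O(1)$ words, so compaction costs $O(1)$ per iteration. The final $\bot$-padding and the flush of the tail list cost $O(ku/w)$ more, giving the claimed $O(ku/w)$ total.
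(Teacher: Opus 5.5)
Your proposal is correct and follows the same route as the paper. Both repeatedly apply the single-word merge step of \Cref{claim:const_merge} to the current leftover word and the next full word of the list that ran out, so each constant-time step consumes a fresh input word and there are $O(ku/w)$ steps in total. You also spell out two details the paper's proof leaves implicit: compacting the partially filled merged chunks into the output with shifts and ORs, which keeps the result a genuine compressed list as \Cref{claim:log2sort} needs, and flushing the remaining words of the surviving list.
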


\iffullversion
\begin{proof}[Proof of \Cref{claim:lin_merge}]
    We will use \Cref{claim:const_merge} as a subroutine. We know that $L_0$ and $L_1$ are word-packed, so we can merge them one word at a time. In particular, start by merging the first words of each list. Then, take the remainder word produced by the merge step, along with the next full word of the list that ran out, and merge those. Repeat that process until one of $L_0$ and $L_1$ is completely exhausted, at which point we concatenate the final remainder word with the merged list we have built up so far. Each of the merge steps takes constant time, since it is implementable via an $\ac0$ circuit as is described in  \Cref{claim:const_merge}. At every iteration, we process a new word from either $L_0$ or $L_1.$ Thus, this algorithm terminates in $2ku/w$ steps, which takes $O(ku/w)$ time. 

\end{proof}
\fi

Now that we have a merge step, it is natural to construct a mergesort-like sorting algorithm. We note that this gives an improvement over the best-known packed sorting algorithm in the Word-RAM model, which is due to Belazzougui, Brodal, and Nielsen \cite{BBN14}. Their algorithm sorts $n$ integers on $w/u$ bits packed into compressed lists in $O(\frac{n}{u}(\log n + \log^2 u))$ time, and works for any choice of $w = \Omega(\log n)$. In particular, they are interested in the setting where each element is on $u = O(\log n)$ bits, so $u = O(w/\log n).$ Thus, their overall runtime is $O(\frac{n\log n}{w}(\log n + \log^2 w)),$ which we will slightly improve.

\begin{claim} \label{claim:log2sort}
    Given a list $L$ of $n$ elements in $[2^u]$ packed into words, we can sort $L$ in $O(n u \log n / w)$ time.
\end{claim}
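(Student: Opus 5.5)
The plan is a bottom-up mergesort on the compressed representation, with \Cref{claim:lin_merge} as the merge routine. There are two phases. First we sort the contents of each individual word. Then we merge sorted runs pairwise for $O(\log n)$ rounds, and each round costs time linear in the number of words.

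\emph{Phase 1 (sorting within a word).} Each word holds $k=\lfloor w/u\rfloor$ elements. We sort these in constant time with a single $\ac0$ operation, using the rank-computation idea from \Cref{claim:const_merge}. For each position $i$, the rank of element $x_i$ is the number of $j$ with $(x_j,j)<(x_i,i)$, where ties are broken by index. Computing this count is a counting problem, which is not obviously in $\ac0$, so we will not compute ranks directly. Instead, for each output slot $r$ and each candidate $i$ we need to test ``exactly $r$ elements precede $x_i$.'' This is a threshold test on $k\le w$ bits, and it is not in polysize $\ac0$ in general.

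To avoid this, I would skip in-word sorting as a primitive. Phase 1 instead runs a merge-based sort inside the word by recursive doubling. Start with runs of length 1, which are trivially sorted. At each level, merge adjacent pairs of runs of length $2^t$ using the constant-time circuit of \Cref{claim:const_merge}, applied in parallel to all pairs in the word. A parallel application is still a polysize constant-depth circuit, because the per-pair gadgets are disjoint. This takes $O(\log k)=O(\log w)$ levels per word. Over all $nu/w$ words the cost is $O((nu/w)\log w)$, which is $O(nu\log n/w)$ when $w\le \poly(n)$. If $w$ is huge relative to $n$, each word holds all $n$ elements and $\log k\le\log n$, so the bound still holds.

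\emph{Phase 2 (merging across words).} After Phase 1 we have $N=\lceil nu/w\rceil$ sorted runs, each one word long. We perform $\lceil\log_2 N\rceil\le \log n$ rounds. In each round we merge adjacent pairs of runs with \Cref{claim:lin_merge}. Merging two runs of total length $m$ elements costs $O(mu/w+1)$ time, so a full round costs $O(nu/w + \#\text{pairs})=O(nu/w)$, since there are at most $N$ pairs. Over $\log n$ rounds this gives $O(nu\log n/w)$ in total.

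One bookkeeping point needs care. \Cref{claim:lin_merge} may produce partially filled words padded with $\bot$. Before the next round we must compact the output so that runs are again packed densely; otherwise the word count could grow. I expect this compaction to be the main obstacle. It should be handled by having the merge routine emit full words: we maintain an output buffer word and use the at-most-two-word output of the merge step, shifting and combining with $\ac0$ shifts and ORs. The number of non-full words is then at most one per run, so the cost per round stays $O(nu/w+N)$ and the total remains $O(nu\log n/w)$.
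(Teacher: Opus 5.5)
Your proposal is correct. Your Phase~2 is the same as the paper's: a merge tree over the $N=O(nu/w)$ words, with $O(\log n)$ levels, each costing time linear in the number of words via \Cref{claim:lin_merge}.

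Where you differ is Phase~1.
\begin{itemize}
\item The paper sorts each word by noting that $O(\log w)$ consecutive $\ac0$ word operations simulate a log-depth circuit. Since $\nc1$ contains sorting (the AKS network), each word is sorted in $O(\log w)$ time.
\item You instead reuse the rank-via-sortedness gadget of \Cref{claim:const_merge}. You apply it in parallel to all disjoint pairs of adjacent sorted blocks inside the word, doubling the block length each time. Each level is then a single $\ac0$ operation, and $O(\log k)$ levels suffice.
\end{itemize}
Both give $O(\log w)$ per word. Your route is self-contained and explicit: it needs neither the AKS network nor the argument that a log-depth circuit can be sliced into $\ac0$ word operations. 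The paper's route is shorter to state.

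You also address something the paper's proof leaves implicit. The merge step of \Cref{claim:const_merge} emits up to two words padded with $\bot$. Without compaction, the word count could grow by a constant factor per level, and the recurrence $T(m)=2T(m/2)+O(m)$ would not directly apply. Your buffer-and-shift compaction uses only variable shifts and locating the first $\bot$, both of which are $\ac0$. It keeps at most one non-full word per run, which makes the linear-per-level accounting rigorous.

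Two small cleanups:
\begin{itemize}
\item Delete the opening sentence of Phase~1 that claims a constant-time in-word sort. You correctly retract it right afterward.
\item Replace the case analysis on $w$ versus $n$ with the observation that each word holds at most $\min(k,n)$ elements. Then Phase~1 costs $O(N\log\min(k,n))\le O(N\log n)$ unconditionally. Your remark that ``each word holds all $n$ elements'' fails when $u$ is also large, but the bound does not need it.
\end{itemize}
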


\begin{proof}[Proof of \Cref{claim:log2sort}]
    We are able to compute the result of one $\ac0$ circuit that takes $w$ bits as input in constant time. Thus, in $\log w$ time, we have access to circuits with unbounded fan-in and logarithmic depth, which captures at least all of $\nc1,$ which, crucially, contains sorting~\cite{AjtaiKS83}. So we require $O(\log w)$ time to sort the elements in a single word.

    This serves as the preprocessing for our merging procedure, and runs in $O(n\log w / w)$ time. The bulk of the computation is the merge step. We can now merge each of the $n\log n/w$ words in a tree-like fashion. One can think of placing each word at the leaf of a height-$O(\log n)$ binary tree, and then storing the merged version of the children in every parent node. The recurrence associated with this is $T(m) = 2T(m/2) + O(m),$ so we require $O(m\log m)$ time, where $m := nu/w$ is the total number of words. Thus the overall runtime is $O(nu \log n/w)$, which dominates the preprocessing.
\end{proof}

We note that the difficulty of packed sorting may be inherent. In particular, if one could remove the $\log n$-factor from the runtime above, this would resolve the longstanding open problem of linear-time sorting for (almost) all word sizes. The Belazzougui, Brodal, and Nielsen paper achieves linear-time for $w = \Omega(\log^{2}n \log\log n),$ leaving the problem open only for word sizes that are $\omega(\log n)$ (from radix sort) and $o(\log^2 n \log\log n).$ However, the paper does this by reducing sorting on word size $w$ to $\log\log n$ levels of packed sorting of $O(n)$ elements on $O(\log n)$ bits. Thus, shaving off the log factor on the packed sorting runtime would give a $O(n \log n \log\log n / w)$ time algorithm, which is linear when $w = \Omega(\log n \log \log n),$ almost completely closing this longstanding gap in word size.

It is also known that shaving this log-factor, or more generally, solving packed sorting in time faster than $o(\min\{n, \frac{nu}{w} \log \frac{nu}{w}\})$, would refute the Undirected $k$-Pairs Conjecture, which is central to the field of network coding \cite{LL04, FHLS19}.

\subsection{Indicator Masks for Compressed Lists}

Beyond sorting compressed lists, we would like to efficiently indicate list elements of interest to our algorithms. To do this, we define an \emph{indicator mask} to be a length-$w$ bitstring that is all zeros except the leading bit of each entry. This bit will indicate the value of some predicate evaluated on the associated entry in another word. We now show, inspired by bit tricks from \cite{fredman-willard}, that we can produce a mask that indicates the all-zeros entries in $O(1)$ time.

\begin{claim} \label{claim:const-all-zeros}
    Let $z$ be a word in a compressed list. There is a constant-time algorithm to produce an indicator mask whose ones indicate that the associated entries in $z$ are all-zeros.
\end{claim}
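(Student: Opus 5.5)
The plan is to give two arguments. The first is immediate in the $\AC0$ Word RAM. The second is the classical Fredman--Willard bit trick, which uses only addition, subtraction and bitwise operations.

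\emph{Direct circuit argument.} Let each entry of $z$ occupy $\ell$ bits. For each entry position $t$, the output bit at the leading position of entry $t$ should equal $\neg\bigvee_{b} z_{t,b}$, where $z_{t,b}$ ranges over the $\ell$ bits of entry $t$. Every other output bit is the constant $0$. This gives one unbounded fan-in OR gate followed by a NOT gate per entry, so the circuit has depth $2$ and size $O(w)$. It is therefore a single $\AC0$ word operation and takes constant time.

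\emph{Bit-trick argument.} This version also works in weaker instruction sets. Let $H$ be the constant word with a $1$ in the leading bit of every entry and $0$ elsewhere; it can be precomputed once. We cannot simply compute $H - z$, because a borrow could cross an entry boundary when an entry is zero. To avoid this, separate the leading bit from the low bits.
\begin{enumerate}
\item Compute $y = z \wedge \neg H$, which keeps only the low $\ell-1$ bits of each entry.
\item Compute $s = (y + \neg H) \wedge H$. Adding the all-ones low part of an entry to its low bits carries into that entry's leading position exactly when the low bits are nonzero. The carry never propagates beyond that leading position, because the leading position of $y$ and of $\neg H$ is $0$ there, so no overflow crosses entry boundaries.
\item Compute $m = \neg(s \vee z) \wedge H$. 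In each entry, $s \vee z$ has its leading bit set if and only if some bit of that entry of $z$ is $1$. Negating and masking with $H$ gives the desired indicator mask.
\end{enumerate}
The whole computation is a constant number of additions and bitwise operations, all of which are in $\AC0$.

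The only point needing care is that the addition in step 2 does not carry between entries. This is the main check in the argument. It follows because the sum of two $(\ell-1)$-bit values is at most $2^{\ell}-2$, which fits in $\ell$ bits, so the leading bit absorbs the carry. If the encoding uses $\ell+1$ bits per entry to accommodate $\bot$, the same argument applies with $\ell$ replaced by $\ell+1$.
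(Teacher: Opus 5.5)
Your proof is correct, and your bit-trick argument is the paper's proof in slightly different form. The paper takes $R = z \wedge \neg I$ (its $I$ is your $H$), computes $D = I - R$, and returns $(I \wedge \neg z) \wedge (D \wedge I)$. Since $\neg(y + \neg H) = H - y$ as $w$-bit words, your mask $\neg(s \vee z) \wedge H = (\neg z \wedge H) \wedge (\neg(y+\neg H) \wedge H)$ is exactly the paper's mask. Your no-carry check is likewise the paper's no-borrow check.

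Your first argument, a single NOR gate per entry, is a shorter route that the paper does not take. It is fully justified by the paper's $\AC0$ Word RAM model, where any $\poly(w)$-size constant-depth function of a word costs unit time. The arithmetic version, in both your write-up and the paper's, has a different advantage. It needs only one addition (or subtraction) and a few bitwise operations against a precomputed constant. So the claim also holds on a standard Word RAM instruction set, without appealing to arbitrary $\AC0$ operations.
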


\iffullversion
\begin{proof}[Proof of \Cref{claim:const-all-zeros}]
    The high-level idea is as follows: we will produce a mask $I_L$ that indicates whether the leading entry is a zero, a second mask $I_R$ that indicates whether the remaining $\ell-1$ are all-zeros, and return $I_L \land I_R$.

    The first mask is easy to construct. Let $I$ be the indicator mask whose indicator bits are all ones. Let $I_L := I \land (\lnot z).$ This sets all indicator bits to $1$ if and only if the leading bits in their associated entries were 0. We then use a bitmask to zero out non-leading bits in $I_L$. 

    Constructing $I_R$ requires a bit more work. First, we need to zero out the leading bits of every entry while leaving the remaining bits untouched. To do this, we compute $R := z \land (\lnot I).$ Now, consider $D := I - R.$ For each entry, since the leading bit of $I$ is a 1 while that of $R$ is a 0, there is no borrowing across entries. If the value in $R$ is nonzero, then the subtraction requires borrowing from the leading 1 in the corresponding entry in $I$. Thus the entry has the form $0*$. If the value in $R$ is all-zeros, the entry is $100\dots$ as before. Thus, $I_R := D \land I$ is an indicator mask for the entries whose non-leading bits were all-zeros in $z$, as desired. We return $I_L \land I_R.$ Because the computation required a constant number of word operations, we get a runtime of $O(1)$.
\end{proof}
\fi

Our next task is to compute an indicator mask for set intersection.

\begin{claim} \label{claim:set-intersect}
    Given two sorted compressed lists $L_0$ and $L_1$ with $k$ elements of length $\ell$ bits, which correspond to $k$-element sets $S_0$ and $S_1$, we can compute a mask that indicates elements in $S_0 \cap S_1$ in $O(k\ell /w)$ time.
\end{claim}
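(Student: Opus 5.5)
We build the mask by a merge-like sweep over the two sorted compressed lists, reusing the single-word merge step of \Cref{claim:const_merge} together with the all-zeros detector of \Cref{claim:const-all-zeros}. The output mask is aligned with $L_0$: for each word of $L_0$ we produce one indicator word whose leading bits mark the entries of that word lying in $S_1$.

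The sweep keeps two pointers, one to the current word $z_0$ of $L_0$ and one to the current word $z_1$ of $L_1$, exactly as in the proof of \Cref{claim:lin_merge}. At each step we decide, for every entry $x$ of $z_0$, whether $x$ equals some entry of $z_1$. This is a constant-depth computation: the entry-equality tests between all pairs of positions are an $\ac0$ circuit of size $O((w/\ell)^2\cdot \ell)=\poly(w)$, with one pair per unit-time operation. Alternatively, one can stay closer to the paper's primitives. For each fixed cyclic shift $s$, XOR $z_0$ with $z_1$ rotated by $s$ entries and apply \Cref{claim:const-all-zeros}. OR-ing over all $s$ would cost $w/\ell$ time per word, however, so the direct $\ac0$ circuit is what we actually use.

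We OR the resulting indicator bits into the mask word currently associated with $z_0$. We then advance whichever word has the smaller last (non-$\bot$) element, as in the merge step; on a tie we advance both. Advancing the word with the smaller maximum is safe. All of its elements are at most that maximum, and they are smaller than every element in later words of the other list other than those already examined. So any common element $x\in S_0\cap S_1$ will be compared at the moment both words containing it are current. This invariant is the standard merge-correctness argument.

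Each step advances at least one pointer, so there are at most $2\lceil k\ell/w\rceil$ steps, each taking $O(1)$ $\ac0$ operations. The total is $O(k\ell/w)$, after adding $O(1)$ for handling $\bot$ padding; padding entries are excluded from equality by definition. The main obstacle I expect is the bookkeeping at word boundaries. A mask word for $z_0$ may accumulate bits over several steps as $z_1$ advances, so we must emit it only when $z_0$ is advanced. We also need the $\bot$ entries in partially filled final words never to fire. Since $\bot$ matches no valid element, the second concern is immediate. The first is handled by writing the mask word out exactly when its $L_0$ pointer moves.
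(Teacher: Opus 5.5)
Your proof is correct, but it follows a different route from the paper.

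The paper first merges $L_0$ and $L_1$ into a single compressed list $L$ via \Cref{claim:lin_merge}. It then XORs $L$ with a copy shifted by one entry ($\ell$ bits), so that each common element, whose two copies are adjacent in $L$, becomes an all-zeros entry. Finally it applies \Cref{claim:const-all-zeros}, so the mask lives on the merged list.

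You never build the merged list. You run the block-level two-pointer sweep directly and test all pairs of entries of the two current words with one equality circuit of size $O(w^2/\ell)$. That is the same order as the circuit the paper already uses inside the merge step (\Cref{claim:const_merge}), so no stronger model is needed. Your invariant is the right one: advance the word with the smaller maximum (both on a tie), and every common element is tested while both words containing it are current. It follows from the lists being strictly increasing.

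The paper's route is more economical in machinery, since intersection falls out of the merge plus two cheap word tricks. Yours has two advantages:
\begin{itemize}
\item It produces a mask aligned with $L_0$, which is what the later algorithms need when they extract positions and look up preimages.
\item It avoids bookkeeping the paper's sketch glosses over. The shift must carry an entry across word boundaries. Also, the list produced by \Cref{claim:lin_merge} can contain $\bot$ padding between the outputs of successive merge steps. Unless $L$ is compacted first, that padding can separate the two copies of a common element, and adjacent $\bot$'s XOR to spurious all-zeros entries.
\end{itemize}
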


\iffullversion
\begin{proof}[Proof of \Cref{claim:set-intersect}]
    Since $L_0$ and $L_1$ correspond to sets, we assume they do not contain duplicates. We begin by merging $L_0$ and $L_1$ to produce a new compressed list $L$. By \Cref{claim:lin_merge}, this can be done in $O(k\ell / w)$ time. We then make a copy of $L$ and shift it to the left $\ell$ bits. We XOR $L$ with the shifted copy. This causes any adjacent duplicates (which correspond to items in $S_0 \cap S_1$) to have all-zeroes entries of length $\ell$ in locations corresponding to indices in $L_0$. We can then use \Cref{claim:const-all-zeros} to produce an indicator mask with ones in these locations in constant time. The runtime is thus dominated by the merge step.
\end{proof}
\fi

The purpose of computing these indicator masks is to efficiently indicate ``special'' list entries that are of interest to our algorithms. However, these compressed lists are created as a result of hashing, and our algorithms generally need to filter out false positives by looking up the associated entries in the preimages. To that end, we need to be able to extract indices of the ones in our masks. We will use the following $\ac0$ circuit to do this:

\begin{claim} \label{claim:msb}
    In the $\ac0$-RAM model, there is a constant-time algorithm that returns the index of the most significant set bit of a word. 
\end{claim}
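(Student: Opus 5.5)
The plan is to exhibit an explicit constant-depth, polynomial-size circuit of unbounded fan-in AND/OR/NOT gates that maps a $w$-bit word $x = x_{w-1}\cdots x_0$ to the binary encoding of the index of its most significant set bit. Since the model charges unit time for any such circuit on a constant number of words, this gives the claimed constant-time algorithm. We adopt the convention that on input $x=0$ the output is a reserved value, say the all-ones word or $\bot$; a single OR over all input bits detects this case, and the output is multiplexed accordingly.

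\textbf{Step 1 (isolate the leading one).} For each position $i$, compute
\[ y_i := x_i \wedge \neg\Bigl(\bigvee_{j>i} x_j\Bigr). \]
Each $y_i$ is an unbounded fan-in OR followed by a NOT and an AND, so it has depth $O(1)$. The total size is $O(w^2)$, or $O(w)$ if the suffix ORs are shared. Exactly one $y_i$ equals $1$ when $x\neq 0$, namely at the most significant set bit. This one-hot word $y$ could also be obtained with standard word operations as $x \wedge \neg(\text{smear}(x \gg 1))$, but the direct circuit is simpler to argue about.

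\textbf{Step 2 (one-hot to binary).} The index has $\lceil \log_2 w\rceil$ bits. For each bit position $b$ of the index, set
\[ \mathrm{out}_b := \bigvee_{i\,:\, \text{bit } b \text{ of } i \text{ is } 1} y_i . \]
This is a single layer of OR gates of fan-in at most $w$, so its size is $O(w\log w)$. Because $y$ is one-hot, $\mathrm{out}$ is exactly the binary expansion of the unique $i$ with $y_i=1$. The output is then padded with zeros to a full word.

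\textbf{Correctness and cost.} The whole circuit has depth $O(1)$ and size $O(w^2)$, which is $\poly(w)$, so it is an $\ac0$ word operation and runs in constant time. I expect no step to be a real obstacle. The only subtle point is remembering that in this model we may not simply invoke the Fredman--Willard multiplication-based MSB trick, since multiplication is not in $\ac0$. The direct circuit above sidesteps this concern entirely.
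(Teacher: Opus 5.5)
Your proposal is correct and follows the paper's proof essentially step for step: one layer decides, for each position $i$, whether it is the leftmost set bit (AND of $x_i$ with the negations of the higher bits), and then $\lceil \log_2 w\rceil$ OR gates assemble the binary index from that one-hot vector. Your explicit treatment of the input $x=0$ is a small addition that the paper leaves implicit.
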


\iffullversion
\begin{proof}[Proof of \Cref{claim:msb}]
    Our goal is to produce a length-$\log w$ index indicating the most significant set bit. There are $w$ possibilities, and we can construct a depth-one circuit that checks each one. In particular, to check if the $i$th bit is the leftmost set bit, we AND $i$ with the negation of each of the $i-1$ bits to its left. 
    
    We now need to reconstruct the index. To do this, we add an additional layer of OR gates $G_j$ for every $j \in [\log w]$. We would like for the concatenation of all the $G_j$'s in order to produce the index. Fix one of the circuits above, and let $i$ be the input bit that it is associated with. We will connect its output to every $G_j$ such that the $j$th bit of the binary representation of $i$ is one. Correctness follows from the fact that at most one of the depth-two circuits will output a 1, so only the $G_j$'s associated with the set bits of that index will be 1. The bottom level of the circuit has $w$ input bits, the fan-in of each of the $n$ ANDs is at most $w$, and the fan-in of each of the $\log w$ ORs is at most $n/2$. Thus, the circuit has polynomial size and depth two, and is therefore $\ac0$. 
\end{proof}
\fi

Note that we can call this procedure, zero out the associated bit, and then repeat to extract the indices of all of the ones in a particular indicator mask. This only requires us to pay $O(1)$ time per set bit whose index we retrieve. 

\section{A Faster Algorithm for All-Edges Sparse Triangle}

Armed with the compressed list operations from above, we are ready to solve triangle problems faster. We begin with a subquadratic algorithm for Sparse Triangle Detection, which is quite natural but is not strong enough to solve the All-Edges version faster. The goal here is to emphasize the relationship between triangle problems and equality product-like problems, and to motivate the reduction to set intersection that we will present in \Cref{sec:AESparseTri}.

\subsection{Warm-Up: Solving Sparse Triangle Detection Faster} \label{sec:warmup}

We will solve sparse triangle detection by reducing to the following problem:

\begin{definition}[All-Edges Equality Inner Product]
    Let $\delta \in (0, 0.5]$ and $d = O(n^\delta)$. Given an $n \times d$ matrix $A$ and an $d \times n$ matrix $B$ with entries in $[n^{1-\delta}] \cup \{\bot\}$, as well as a set $P$ of $O(n^{1+\delta})$ points in $[n] \times [n]$, is there an $(i, j)$ in $P$ and a value of $k \in [n^\delta]$ such that $v := A[i, k] = B[k, j]$ and $v \neq \bot$? 
\end{definition}

\begin{theorem}\label{thm:warmup}
    There is a $O(n^{1+\delta})$-time randomized reduction from the triangle detection problem on graphs of maximum degree $O(n^\delta)$ to All-Edges Equality Inner Product. When there is no triangle in the original graph, the All-Edges Equality Inner Product instance is a no instance. When there is a triangle, the All-Edges Equality Inner Product instance is a yes instance with constant nonzero probability.
\end{theorem}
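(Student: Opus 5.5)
We reduce triangle detection to All-Edges Equality Inner Product by \emph{randomly placing each neighborhood into a table of $d=O(n^\delta)$ buckets}, so that a common neighbor of $i$ and $j$ lands in the same coordinate of row $i$ of $A$ and column $j$ of $B$. Concretely, pick a random hash function $h:[n]\to[d]$ (pairwise independent suffices), with $d = c\,n^\delta$ for a suitable constant $c$. For each vertex $i$ and each $k\in[d]$, set $A[i,k]$ to be some neighbor $x$ of $i$ with $h(x)=k$, encoded as a value of $[n^{1-\delta}]$. If no such neighbor exists, set $A[i,k]=\bot$. Set $B[k,j]$ symmetrically from the neighborhood of $j$, using the same rule for choosing among colliding neighbors.

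The encoding issue is that entries must lie in $[n^{1-\delta}]$, not $[n]$. Since $h(x)=k$ is already known from the coordinate, we store only the residual information: choose $h$ so that the map $x\mapsto (h(x), g(x))$ is injective, with $g(x)\in[n^{1-\delta}]$. For example, take a random invertible linear map over $\mathbb{F}_2^{\log n}$ and split its output bits into $\delta\log n$ bits for $h$ and the rest for $g$. Store $g(x)$ in the entry. Then $A[i,k]=B[k,j]\ne\bot$ holds iff the same vertex $x$ was chosen in bucket $k$ for both $i$ and $j$. Let $P$ be the edge set, of size $O(n^{1+\delta})$. Building $A$ and $B$ takes $O(n^{1+\delta})$ time by scanning adjacency lists.

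\emph{Soundness.} If the graph has no triangle, then any witness $(i,j)\in P$, $k$ with $A[i,k]=B[k,j]\ne\bot$ would, by injectivity of $x\mapsto(h(x),g(x))$, give a common neighbor $x$ of the adjacent pair $i,j$. That is a triangle, a contradiction. So the instance is a no instance.

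\emph{Completeness.} This is the main obstacle, because bucket collisions may cause a different neighbor to be stored in place of the triangle vertex. Fix a triangle $\{i,j,x\}$ with $(i,j)\in P$. Since each neighborhood has size at most $n^\delta$, pairwise independence and a union bound give
\[
\Pr[\exists\, y\in N(i)\cup N(j),\ y\ne x,\ h(y)=h(x)] \le 2n^\delta/d = 2/c,
\]
which is at most $1/2$ for $c=4$. When this event fails, $x$ is the unique occupant of bucket $h(x)$ in both neighborhoods. Hence $A[i,h(x)]=B[h(x),j]=g(x)$ regardless of the tie-breaking rule, and the instance is a yes instance with probability at least $1/2$.

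The care needed is in choosing the hash family so that $h$ is pairwise independent and $(h,g)$ is simultaneously injective. A random affine bijection of $\mathbb{F}_2^{\lceil\log n\rceil}$ provides both properties, with $d$ rounded to a power of two.
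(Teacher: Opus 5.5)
Your proposal is correct and follows the paper's proof: hash the vertices into $\Theta(n^\delta)$ color classes with a pairwise independent hash, keep one neighbor of each vertex per class, and union-bound over the at most $2n^\delta$ other neighbors of $i$ and $j$ to show that the third triangle vertex is alone in its class with constant probability. The only real difference is how you get entries in $[O(n^{1-\delta})]$: the paper re-indexes vertices within each color class and uses Chebyshev plus a union bound to show every class has size $O(n^{1-\delta})$, which adds a failure event, whereas your random affine bijection of $\mathbb{F}_2^{\lceil\log n\rceil}$ makes $x\mapsto(h(x),g(x))$ injective with $g$ bounded deterministically; its top bits are not exactly pairwise independent, but the bound $\Pr[h(x)=h(y)]\le 1/d$ that you actually use does hold.
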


\begin{proof}[Proof of \Cref{thm:warmup}]
We assume without loss of generality that $G$ is tripartite, with parts $I$, $J$, and $K$ each of size at most $n$. We begin by ``color-coding'' vertices as in \cite{AYZ95}, but with a large color palette. We impose the following rule: every vertex in $I \cup J$ has an edge to at most one vertex per color class of $K$. To enforce this, if a vertex $u \in (I \cup J)$ has multiple edges to one color class, we will select one uniformly at random to keep and delete the rest. We show that every triangle $(u,v, w)$ in the graph survives with constant probability.
    
    Suppose we have a pairwise independent hash family $\calH
    = \{h: [n] \to [C n^\delta] \}$ for large constant $C$, where each $h \in \calH$ corresponds to a coloring of the vertices of $K$ with $C n^{\delta}$ colors. 
    Suppose $u \in I,$ $v \in J$ and $w \in K$ form a triangle. Let $N(u)$ and $N(v)$ denote the neighbors of $u$ and $v$ in $K$, respectively. The probability that the triangle $u, v, w$ survives is the probability that both $(u, w)$ and $(v, w)$ survive. Fix a color $c \in [C n^{\delta}]$. We have:
    \begin{align*}
        \Pr[&(u, v, w) \text{ survives and $h(w) = c$}]\\ &=\Pr_{h \gets \calH}\left[h(w) = c \land (\forall k \in (N(u) \cup N(v)) \setminus \{w\}) ~h(k) \neq c \right] \\
        &= \frac{1}{Cn^\delta} \Pr_{h \gets \calH}\left[(\forall k \in (N(u) \cup N(v)) \setminus \{w\}) ~h(k) \neq c \mid h(w) = c \right] \\
        &\geq \frac{1}{Cn^\delta} \left(1 - \Pr_{h \gets \calH}\left[(\exists k \in (N(u) \cup N(v)) \setminus \{w\}) ~h(k) = c \mid h(w) = c \right] \right) \\
        &\geq \frac{1}{Cn^\delta} \left(1 - |(N(u) \cup N(v)) \setminus \{w\}| \cdot \frac{1}{Cn^\delta} \right) \\
        &\geq \frac{1}{Cn^\delta} \left(1 - 2n^\delta \cdot \frac{1}{Cn^\delta} \right) \geq \frac{1}{Cn^\delta}\left(1-\frac{2}{C}\right).
    \end{align*}

    For each of the $C n^\delta$ choices of $c$, all of the events in question are disjoint. Summing over all of them, the probability $(u,v,w)$ survives is at least $1-2/C$. Setting $C \geq 1$ large, we conclude that every triangle in the original graph survives with constant probability. 
    
    We now use the color classes to express the remaining graph as a smaller matrix. We note that with probability at least $1-1/D^2$, all of the color classes have size at most $Dn^{1-\delta}$; in the end, we will choose $D$ to be a large constant. To see this, let $X_c := \sum_{w \in K} \mathbbm{1}[h(w) = k]$, the number of vertices that are assigned color class $c$. Note that $\E[X_c] = n^{1-\delta}/C.$ By pairwise independence, $\mathsf{Var}(X_c) = \sum_{w \in K} \mathsf{Var}(\mathbbm{1}[h(w) = k]) \le n^{1-\delta}/C.$ By Chebyshev's inequality,
    \[\Pr[X_c - \E[X_c] \ge Dn^{1-\delta}] \le \frac{1}{CD^2 n^{1-\delta}}.\]
    Union bounding over all $Cn^\delta$ color classes, we have that all color classes have size at most $D n^{1-\delta}$ with probability at least $1-1/D^2$. We construct an $n \times n^\delta$ matrix $A$ that captures edges between $I$ and $K$, as well as a $n^\delta \times n$ matrix $B$ that captures edges between $K$ and $J$. Note that all indices $i, j$ are in $[n]$ and index their respective vertex sets, whereas the $k$ indices are in $[O(n^{1-\delta})]$ and determine a color class of $K.$ We also assume that we re-index the nodes in each color class of $K$, so that the tuples $(k, v) \in [n^\delta] \times [O(n^{1-\delta})]$ uniquely identify some node in the $k$th color class of $K$. 
    
    We now set the $(i, k)$ entry of $A$ to the node in the $k$th color class of $K$ that the vertex $i$ has an edge to (and the $(k, j)$ entry of $B$ analogously). If the vertex has no edge to this color class, set the entry to $\bot$. Now, if there exists a $k$ such that $A[i, k] = B[k, j] = v$ and $v \neq \bot$, both $i$ and $j$ share an edge to the $v$th node of the $k$th color class in $K$. Thus, if $(i, j)$ is itself an edge, then $i, j, (k, v)$ form a triangle. 
    
    Thus, with constant probability, the All-Edges Equality Inner Product problem on the matrices $A$ and $B$ with $P := E,$ the edge set of $G,$ returns the same answer as the Sparse Triangle Detection problem on the original graph. 
\end{proof}

We now give an algorithm to solve All-Edges Equality Inner Product faster than the obvious $n^{1+2\delta}$ time solution. The high-level idea is as follows: we use a linear hash function to hash the matrix entries down to $O(\log w)$ bits, write the problem down in a compressed matrix form by packing multiple entries into a single word, and then use word operations to compute the equality inner product. 

\begin{theorem}\label{thm:tri-det}
    For every $\delta \in (0,1/2]$ and word size $w \geq \Omega(\log n)$, there is a Las Vegas $O(n^{1+2\delta} (\log w) / w )$-time algorithm for the All-Edges Equality Inner Product problem on $n \times n^\delta$ and $n^\delta \times n$ matrices.
\end{theorem}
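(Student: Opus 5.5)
The plan follows the outline the paper gives just before the statement: hash every matrix entry down to $b = O(\log w)$ bits, pack rows of $A$ and columns of $B$ into compressed lists, compare them with word operations, and then filter out false positives.

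\textbf{Hashing and packing.} Pick a random linear hash $g:[n^{1-\delta}] \to [2^{b}]$ with $b = c\log w$, drawn from a pairwise independent (e.g.\ $\mathbb{F}_2$-linear) family. Reserve one extra code value for $\bot$, and hash so that $\bot$ never collides with a valid value. Evaluating $g$ needs multiplication, which we do not have. To avoid it, compute all hashes in a preprocessing pass over the $O(n^{1+\delta})$ entries, spending $O(\log n)$ time per entry with shifts and XORs. This costs $O(n^{1+\delta}\log n)$, which is $o(n^{1+2\delta}\log w/w)$ whenever $w \le n^{\delta}$; larger $w$ is the trivial regime. Alternatively, tabulate $g$ on $O(\log n)$-bit chunks with tables of size $n^{o(1)}$.

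Next, store each row $A[i,\cdot]$ and each column $B[\cdot,j]$ as a compressed list of $d=O(n^\delta)$ entries of $b+1$ bits. Each list occupies $O(n^\delta \log w/w)$ words. For each pair $(i,j)\in P$, walk the two lists word by word. For each pair of aligned words, XOR them and apply \Cref{claim:const-all-zeros} to get an indicator mask of the positions $k$ with $g(A[i,k]) = g(B[k,j])$, and AND this mask with a mask of the non-$\bot$ positions. The entries are aligned by index $k$, so no sorting is needed. One pair costs $O(n^\delta \log w/w)$, and all of $P$ costs $O(n^{1+2\delta}\log w/w)$.

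\textbf{Filtering false positives.} Use \Cref{claim:msb} repeatedly to extract every set bit of each mask. For each extracted $k$, compare the true values $A[i,k]$ and $B[k,j]$ in $O(1)$ time. A genuine match is always reported, and we output yes if one is found. This makes the algorithm Las Vegas: it never errs, and only its running time is random.

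\textbf{Main obstacle: bounding the expected number of false hits.} For a fixed coordinate $k$ with $A[i,k]\neq B[k,j]$, both non-$\bot$, pairwise independence gives collision probability $2^{-b} = w^{-c}$. The expected number of spurious positions over all pairs is therefore at most $|P|\cdot d \cdot w^{-c} = O(n^{1+2\delta}/w^{c})$. Taking $c\ge 1$ makes the total extraction cost $O(n^{1+2\delta}/w)$, which is within budget.

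One subtlety is that we may stop early at the first true match. If we do not, true matches could number up to $n^{1+2\delta}$. So for each pair, stop as soon as one verified match is found; then each pair contributes at most one true-match extraction, for $O(|P|)$ total. The expected running time is $O(n^{1+2\delta}\log w/w + n^{1+\delta}\log n)$, which is the claimed bound in the relevant regime.
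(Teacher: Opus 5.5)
Your proposal is correct and follows essentially the paper's proof: linear hashing of entries to $O(\log w)$ bits, packing rows of $A$ and columns of $B$ as aligned compressed lists, XOR plus the all-zeros indicator mask and repeated MSB extraction, and exact verification of each hit. The only real difference is that you bound the false-positive work globally by linearity of expectation instead of the paper's per-query Markov budget with reruns, which suffices for an expected-time bound and is slightly simpler; you also explicitly account for computing the hashes without multiplication, which the paper leaves implicit in this proof.
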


\iffullversion
\begin{proof}[Proof of \Cref{thm:tri-det}]
    The entries of each matrix are in $[O(n^{1-\delta})],$ and thus can each be written down in $w$ bits. Our goal is to pack these matrices into a smaller space, which will improve query time. We will create new matrices $A'$ and $B'$ in the following way. We define $h_r(x) := \langle r, x \rangle,$ where the inner product is taken mod 2. For each $k \in [n^\delta]$, we draw $\ell - 2$ bitstrings $r_1, \dots, r_{\ell-2}$ uniformly at random from $\{0, 1\}^{w},$ for $\ell$ to be set later. For every non-null element  $x$ in row $k$ of $A$ (or column $k$ of $B$), replace $x$ with the length-$\ell$ bitstring $h_{r_1}(x) \cdots h_{r_{\ell}}(x)00$ in the corresponding location of either $A'$ or $B'$. If $x$ is $\bot$ and an entry of $A$, put $0\dots001$ in $A'$. If $x$ is $\bot$ and in $B,$ put $0\dots 010$ in $B'$. This way, the hash values of null entries cannot equal one another or anything else.  Note that if $x \neq y$, we have that $\Pr_r[h_r(x) = h_r(y)] = 1/2.$ Thus, over all $\ell-2$ inner products, the probability that $x$ and $y$ hash to the same value is $1/2^{\ell-2}$. 

    Each row of $A'$ (and column of $B'$) has $n^\delta$ entries, each of which is a length-$\ell$ bitstring. Then we require $n^\delta \ell / w$ words to store each one. Our queries are of the form $(i, j)$, so we can think of them as computing equality product on the $n^\delta \ell / w$ registers associated with the $i$th row of $A'$ and the $j$th column of $B'$.
    
    We set $\ell = 3\log w,$ so that the probability that two distinct values hash to the same bitstring (a false positive) is $4/w^3$. To achieve the desired runtime, we need to answer each query in $n^\delta \ell/w$ time, which is precisely the number of pairs of words we are operating on. For each pair of words that make up the relevant row/column pair, XOR them. Note that any matching entries become the all-zeroes entry, and we can create an indicator mask for them using  \Cref{claim:const-all-zeros}.
    Then, we use \Cref{claim:msb} to extract the index of the most significant set bit. This index can be translated to a value of $k \in [n^\delta]$ provided we keep track of which registers we are checking, and then we can look up the original entries $A[i, k]$ and $B[k, j]$ to verify. If it turns out that this value is a false positive, we can set this bit to zero and then rerun the algorithm to find the new most significant set bit, which corresponds to another witness.

    However, we would like to keep our runtime per query at $n^\delta \log w / w.$ Since identifying and looking up a false positive, we can check at most $O(n^\delta \log w)/ w$ of them per query. Fixing a particular query, let $P$ be the random variable corresponding to the number of false positives in the associated compressed equality inner product calculation. By construction, we have $\E[P] = 4 n^\delta/w^3$. By Markov's inequality, \[\Pr[P \geq 100n^\delta \log w / w] \leq \frac{1}{25 w^2 \cdot \log w} \leq \frac{1}{25}.\]
    
    We can keep a counter for the number of false positives we process per query, and if we exceed $100 n^\delta \log w / w,$ we will rerun the query in a subsequent round of the algorithm. The time per edge query is $O(n^\delta \log w / w),$ and we will run the algorithm on $n^{1+\delta}$ edges in the first round, $n^{1+\delta}/25$ in the second, and so on until all queries have definitive answers. Thus, the total number of edge queries required is bounded by \[n^{1+\delta} \cdot \sum_{i=0}^\infty \left(\frac{1}{25}\right)^i \leq \frac{n^{1+\delta}}{1-\frac{1}{25}} = O(n^{1+\delta}).\] Thus, the expected runtime is $O(n^{1+2\delta} \log w / w)$.
\end{proof}
\fi

While this algorithm works well for triangle \emph{detection}, the color coding reduction preserves every triangle with constant probability. Thus, if we tried to use this approach to solve All-Edges Sparse Triangle, we would answer each edge query correctly with constant probability. However, we would like the following stronger correctness guarantee: the entire $O(n^{1+\delta})$-bit output should be correct with high probability. To do this, we need to circumvent the reduction to All-Edges Equality Inner Product.

\subsection[AE-Sparse-Tri]{Solving All-Edges Sparse Triangle in $O(n^{1+2\delta} \log w/ w)$ Time} \label{sec:AESparseTri}

We begin by formally defining the All-Edges Sparse Triangle problem. We note that the instances that arise in the literature from the APSP and 3SUM reductions have $m = O(n^{1.5})$ and maximum degree $\Delta = O(\sqrt{n}).$ We give a slight generalization, allowing for graphs of any sufficiently small sparsity. However, our techniques do require that the maximum degree $\Delta = O(m/n).$ We also assume without loss of generality that the graph is tripartite; if this is not the case, generate two more copies of the vertex set, and add the original edges of the graph between each of the pairs of parts. 

\begin{definition}[All-Edges Sparse Triangle]
    Let $G$ be a tripartite graph whose vertex set is $I \cup J \cup K$ such that $|I| = |J| = |K| = n.$ Let the number of edges be $m = n^{1+\delta}$ for some $\delta \in (0, 0.5],$ and let the maximum degree be $\Delta := O(n^\delta)$. For every edge $e \in E \cap (I \times J)$, decide whether $e$ participates in a triangle. 
\end{definition}

\begin{reminder}{\Cref{thm:all-edges}}
    There is a Las Vegas algorithm for {\sc All-Edges Sparse Triangle} on graphs with maximum degree $n^{\delta}$ running in $O(n^{1+2\delta} \log w)/ w$ expected time.
\end{reminder}

\begin{proof}[Proof of \Cref{thm:all-edges}]
    Let the word size $w \geq  c\log(n)$ for a constant $c > 0$. As in \Cref{sec:warmup}, we can think of solving All-Edges Sparse Triangle as responding to $O(n^{1+\delta})$ edge queries, each in $o(n^\delta)$ time. Our algorithm is loosely inspired by a communication complexity protocol for set disjointness due to Dasgupta, Kumar, and Sivakumar \cite{DKS12}. We do the following: 
    
    \begin{enumerate}
        \item Randomly partition $K$ into $n^\delta/w$ color classes.
        \item Compress the name of each vertex in $K$ using a random hash function: $h: [n] \to [\poly(\ell)]$.
        \item For each vertex $i$ in $I \cup J,$ for every color class $v$ in $[n^\delta / w]$, write down a compressed list that stores the hashes of their neighbors in that color class in sorted order.
        \item To respond to an edge query $(i, j)$, for every color class, compute the set intersection of the compressed lists associated with $i$ and $j$'s neighbors with that color. If it is non-empty, $(i, j)$ participates in a triangle.
    \end{enumerate}

    We will now discuss each step in more detail. We begin by color-coding the vertex set, randomly partitioning $[n]$ into $n^\delta/w$ color classes $S_1, \dots, S_{n^\delta/w}$. We need to draw $h$ from an $O(\log n)$-wise independent hash family, as follows. 
    
    Fix an arbitrary vertex $i \in (I \cup J)$ and color class $v \in [n^\delta/w]$. We define the random variable $X_{i, v} := |N(i) \cap S_v|$ to be the number of neighbors that $u$ has in the $v$th color class. Clearly, $\E[X_{i, v}] = w \geq c \log(n)$. By the $k'$-wise independent Chernoff bound (\Cref{thm:chernoff}), we know that for $\delta' \geq 1$ and $k' \leq \delta' \lfloor \mu/e^{1/3}\rfloor$ we have $\Pr[X \geq (1+\delta')\mu] \leq e^{-k'/2}$. 
    Setting $k' = 6 \ln(n)$ and setting $\delta' = 6e^{1/3}/c \leq O(1)$, we have \[\Pr[X_{i, v} \geq (1+\delta')w] \leq 1/n^3.\] We can union bound over all the color classes and all the vertices, which tells us that each vertex's neighborhood in each color class has size at most $k := (1+\delta')w$ with probability $1-1/n$. If this does not hold for all vertices, we can just rerun the color coding.

    We will now run a second level of hashing. We pick a random hash function $h: [n] \to [k^3]$ such that each element in the image can be expressed in $\ell := 3 \log w + O(1)$ bits. We use $\ell$ copies of the linear hash function $h_r(x) := \langle r, x \rangle \bmod{2}.$ One might (rightly) worry that computing an inner product modulo 2 is \emph{not} an $\ac0$ operation. We sidestep this issue as follows: note that all of our hashing is done in the preprocessing step, and that we are only hashing $O(n^{1.5})$ values. So although the hashes technically cannot be computed in $\ac0$, we do not need to use word-level parallelism here: we can afford an extra $w$ factor in the running time of preprocessing. Note we also need at most $O(\poly(\log n))$ bits of randomness to specify the hash function.
    
    So for every vertex $i \in I \cup J$ and every color class $S_v$, we can hash the elements of $N(u) \cap S_v$, put them in sorted order, and write them down in compressed form using $k\ell/w$ words. Note that we can have collisions here, so we will store a mapping from the hashes in the image to the vertices in the preimage (written down in sorted order), which requires $O(n^{1+\delta} w)$ time. Again, we require 
    $O(\poly( \log n))$ bits of randomness, all of which  is used to build our word-packed lists in preprocessing.

    When we receive an edge query $(i, j)$, we need to determine for every color class $S_v$ whether the associated sets $N(i) \cap S_v$ and $N(j) \cap S_v$ have nonempty intersection. If we fix a color class $v$ and look at the associated packed lists, by \Cref{claim:set-intersect}, we can produce an indicator mask corresponding to their intersection in time $O(\log w)$. However, since we have hashed the vertex names, it is possible that items in the intersection are false positives. Thus, we need extract the values in the intersection, look them up in the preimage, and verify that the original entries match as well. 

    We first bound the probability of false positives. The probability that any two entries collide is $1/k^3,$ and there are $\binom{k}{2}$ pairs of entries, so we get at most $1/2k$ collisions per color class per query in expectation. We then look up the true vertex names in the preimage. To do this, we can think of these as being written down in sorted order, where each value takes up one word. We then merge the two lists and compute the true intersection in time linear in the size of the preimage, which in expectation, is $O(1/k^2).$ We can also think of this cost as the number of collisions within the neighborhoods of $i$ and $j$. So the expected cost of checking all of these collisions for a particular color class is $O(1/k),$ or linear in their number.

    Recall that we have $n^\delta/w$ color classes, so computing the indicator masks over all of them requires time $O(n^\delta \log w/w)$ per query. As a result, we can afford to spend asymptotically that same amount of time checking false positives. Let $P$ be the number of collisions we have to check. By construction, we have $\E[P] = \frac{1}{2k} \cdot \frac{n^\delta}{w} = \frac{n^\delta}{2(1+\delta')w^2}.$ By a Markov bound, 
    \[ \Pr\left[P \geq n^\delta \log w / w \right] \leq \frac{1}{2(1+\delta')w \log w} \leq \frac{1}{2}.\]
    Thus, as in the All-Edges Equality algorithm in \Cref{thm:tri-det}, if we see more than $n^\delta \log w / w$ collisions, we rerun the query in a subsequent round of the algorithm. The number of queries required is upper bounded by $n^{1+\delta} \cdot \sum_{i = 0}^\infty 2^{-i} \leq 2n^{1+\delta}.$ Thus, we require a total of $O(n^{1+\delta})$ queries, each of which takes $O(n^\delta \log w/ w)$ time. As a result, we can compute All-Edges Sparse Triangle in expected time $O(n^{1+2\delta} \log w/ w).$ 
\end{proof}

\section{Solving Other Triangle Problems Faster}
We now look at other triangle problems whose algorithms can be improved using the techniques from above. We begin by looking at  all-edges monochromatic triangle.

\subsection{All-Edges Monochromatic Triangle Detection}

Vassilevska, Williams, and Yuster gave a $O(n^{(3+\omega)/2})$ time algorithm for the detection version of the problem \cite{DBLP:journals/talg/VassilevskaWY10}. Thus, in the dense case, it seems to be meaningfully harder than triangle detection. However, in the sparse setting, we give an algorithm that runs in the same slightly subquadratic runtime as we get from \Cref{thm:all-edges} for All-Edges Sparse Triangle. We define the problem in the usual way, but require the additional stipulation of bounded degree. 

\begin{definition}[All-Edges Sparse Monochromatic Triangle]
    Let $G$ be a tripartite graph whose vertex set is $I \cup J \cup K$ and $|I| = |J| = |K| = n.$ Let the maximum degree in the graph be $O(n^\delta).$ The graph has colors along the edges; that is, there is a function $c: E \to [n^2]$. For every edge $e \in E \cap (I \times J)$, decide whether $e$ participates in a monochromatic triangle.
\end{definition}

\begin{reminder}{\Cref{thm:monochromatic}}
    There is a Las Vegas algorithm for the All-Edges Sparse Monochromatic Triangle problem that runs in expected time $O(n^{1+2\delta} \log w) / w$.
\end{reminder}

\iffullversion
\begin{proof}[Proof of \Cref{thm:monochromatic}]
    We will follow the proof of \Cref{thm:all-edges} fairly closely. We run the following slightly modified algorithm. For clarity, we will refer to the edge colors prescribed by $c$ as labels. 

    \begin{enumerate}
        \item Randomly partition $K$ into $n^\delta/\log n$ color classes using a $O(\log n)$-wise independent hash function.
        \item Compress the name of each vertex in $K$ using a linear hash function: $h: [n] \to [\poly(\ell)]$. Compress the names of the edge colors using another linear hash function $h_c: [n] \to [\poly(\ell)].$
        \item For each vertex $i$ in $I \cup J,$ for every color class $S_v$ in $[n^\delta / \log n]$, write down a compressed list that describes every node $x$ in $N(i) \cap S_v$. For every $x$, we will store its hash concatenated with the hash of the label $c(i, x).$ 
        \item To respond to an edge query $(i, j)$, for every color class, compute the set intersection of the compressed lists associated with $i$ and $j$'s neighbors with that color.  If the intersection is non-empty and the label of the two neighbors matches the label of $(i, j)$, we have found a monochromatic triangle.
    \end{enumerate}

    Let $k$ be $\max_{v \in [n^\delta/w], u \in (I \cup J)} |N(u) \cap S_v|$, the maximum number of neighbors in $K$ a given vertex in $I \cup J$ has in a particular color class. By the same argument as in the All-Edges setting, we have that after color-coding as in Step 1, $k = O(w)$ with high probability. As before, our goal is to write down matrices $A$ and $B$ of dimension $n \times \sqrt{n}$ and $\sqrt{n} \times n$ that capture $I$ and $J$'s edges to $K$, respectively. We now choose two linear hash functions $h_v: [n] \to [k^3]$ as before, and $h_c: [n^2] \to [k^3]$ for hashing the edge labels. Suppose vertex $i$ in $I$ has an edge with label $c_i$ to the $v$th vertex in the $k$th color class of $K.$ The associated hash value is $h_v(v) || h_c(c_i).$ We compute these hashes for all (at most) $k$ such vertices in $N(i) \cap S_v$, put them in sorted order, and word-pack them into $k\ell/w$ words. Again, we will have hash collisions, so we can store a mapping from the hashes in the image to the vertices in the preimage. This preprocessing (across all vertices and color classes) takes $O(n^{1.5}\log n)$ time and $O(\poly( \log n))$ random bits. 

    Thus, for a given edge query $(i, j, c(i, j))$, our task is to decide whether there exists a $v$ such that the associated sets $N(i) \cap S_v$ and $N(j) \cap S_v$ have nonzero intersection and all three labels match. We can use our compressed list operations to solve this problem slightly faster than brute force. The chunk length is $2\ell,$ since we have concatenated the hashed color class indices with hashed labels. As before, if we fix a color class and look at the associated packed lists ($N(i) \cap S_v$ and $N(j) \cap S_v$), by \Cref{claim:set-intersect}, we can produce an indicator mask that conveys their intersection in time $O(\log \log n).$ However, we need to exclude triangles that are not monochromatic.
    
    To do this, repeat the hashed color $h_c(c(i, j))$ in the lower $\ell$ bits of each of the $2\ell$-length chunks. Take the XOR of this value with every word in the packed lists $N(i) \cap S_v$ and $N(j) \cap S_v$. Then, produce an indicator mask using \Cref{claim:const-all-zeros} with chunk size $\ell$ to indicate edges with the appropriate color. We can then zero out ones in the mask corresponding to even chunk locations, as these correspond to a subset of the potential triangles and are taken care of by the earlier mask. We then shift the entire mask $\ell$ bits to the left, so that our ones are aligned along the length $2\ell$ chunks, as desired. Finally, we can AND this mask with the one created above, producing a mask that indicates all potential monochromatic triangles (but will have false positives corresponding to collisions under $h_v$ and under $h_c$).

    We now need to check false positives, as in the All-Edges version. We note that the probability that any two length-$2\ell$ entries collide (either under $h_v$ or $h_c$) is upper bounded by $2/k^3$ by union bound. As a result, by the same argument as above, we expect $O(1/k)$ collisions per color class per query in expectation. For each one, we look up both the original vertex name and also the true label, and verify that both preimages match. The expected preimage size is $O(1/k^2),$ so with high probability, we will only need to check a constant number of values here. Thus, the expected cost of checking all of these collisions is still $O(1/k).$ As before, we simply allot ourselves a ``budget'' of false positives to check, and return ``don't know'' on any queries for which we cannot check them all. By the same analysis, we obtain an expected running time of $O(n^{1+2\delta} \log w/ w)$ for this problem. 

\end{proof}
\fi

\subsection{A Faster Algorithm for Exact Triangle}

We now consider Exact Triangle, which, unlike the problems mentioned above, can be sped up even in the dense case. The problem is known to require $n^{3-o(1)}$ time under both the 3SUM hypothesis \cite{VW13} and the APSP hypothesis \cite{VW18}, but it admits a log-factor speedup using an equality product based approach similar to the one in \Cref{sec:warmup}.

\begin{definition}[Exact Triangle]
    This problem asks, given a target $T$ and an $n$ node graph with edge weights in $\{-n^c, \dots, n^c\}$ for some integer $c > 0,$ does there exist a triple of vertices $p, q, r$ such that $w(p, q) + w(q, r) + w(r, p) = T$?
\end{definition}

\begin{reminder}{\Cref{thm:exact-tri}}
    There is a Las Vegas algorithm that solves {\sc Exact Triangle} on $n$-node graphs in expected $O(n^3 \log w / w^{3/2})$ time.
\end{reminder}

\begin{proof}[Proof of \Cref{thm:exact-tri}]
    
    Without loss of generality, we may assume the graph $G = (V,E)$ is tripartite with parts $I$, $J$, and $K$, where each part has $n/3$ nodes. Given a target weight $T$, and weight function $w' : E \rightarrow {\mathbb Z}$, we want to determine if there are $i \in I$, $j \in J$, and $k \in K$ such that $i,j,k$ is a triangle and $w'(i,j) + w'(j,k) + w'(k,i) = T$. For simplicity, we assume that all weights are in $[-2^w, 2^w]$ where $w$ is the word size.

    Let $t = \sqrt{w}/(6 \sqrt{\log w})$. We split the nodes of $I$, $J$, and $K$ into at most $n/t + 1$ \emph{groups} of $t$ nodes each. Our idea is to check each triple of groups (containing $3t$ nodes and $O(t^2)$ edges) for an exact triangle, using a constant number of word operations.  

    To this end, we let $\ell := 10 \log w$, and we will hash all edge weights modulo a random prime $p$ from the integer interval $\left[2, 2^\ell\right]$. (To be clear, we replace each edge weight with the corresponding residue in $\{0,1,\ldots,p-1\}$.) By the Prime Number Theorem, a random integer in this range is prime with probability $\Omega(1/\ell)$. We can efficiently find such a prime by picking integers at random and primality testing. Note that this process costs only an (additive) $\poly(\log w)$ factor. 

    Now take some triple of groups $I' \subseteq I$, $J' \subseteq J$, $K' \subseteq K$,  where $|I'|=|J'|=|K'|=t$. The number of bits needed to encode the weighted subgraph induced by $I' \cup J' \cup K'$ is
    \[3t^2 \cdot \ell = 3\left(\frac{\sqrt{w}}{6 \sqrt{\log w}}\right)^2 \cdot (10 \log w) < w,\] so the entire subgraph can be packed into a single word. 

    Next, we observe that the problem of checking whether a given triple $I',J',K'$ contains an exact triangle modulo $p$ can be computed with a $\poly(w)$-size $\AC0$ circuit. Let $T_p = (T \bmod p)$ be the corresponding integer in $\{0,1,\ldots,p-1\}$. Our circuit takes an OR over all $O(t^3) \leq O(w^{3/2})$ triples of nodes $i \in I'$, $j\in J'$, and $k \in K'$, then checks that the edges $(i,j),(j,k), (k,i)$ exist and \[w'(i,j) + w'(j,k) + w'(k,i) \in \{T_p, p + T_p, 2p + T_p\}.\] (Since only three weights are being summed, these are all the possible choices for the integer sum to equal the target modulo $p$.) Since addition is in $\AC0$, this construction is computable in $\AC0$. As a consequence, there is a single word operation to determine whether the triple $I',J',K'$ contains an exact triangle mod $p$. For notational brevity, let's call this an \emph{mod triangle check} which returns a yes or no answer.

    All of the ingredients are now in place for us to describe the algorithm in full. Pick a uniform random prime $p$ from $\left[2, 2^\ell\right]$, hash all edge weights modulo $p$, and for all $(n/t)^3$ triples $I', J', K'$ on $3t$ nodes, we run a mod triangle check in constant time. For each check that returns yes, we enumerate all $O(t^3)$ triples of nodes $i \in I'$, $j \in J'$, $k \in K'$ to determine if there is an exact triangle in $I',J', K'$ with the original unhashed weights (returning this triangle if it's found). If more than $(n/(tw))^3$ checks of triples return yes, then we stop and say \emph{don't know}. Otherwise, we return \emph{reject} if no exact triangle was found after enumerating over all triples $I',J',K'$. 

    First we verify the running time. By construction, our algorithm takes time $\poly(\log w) + O(n^2)$ to pick a random prime of $\ell=\Theta(\log w)$ bits and hash all weights to $\ell$ bits. Next, it takes $O((n/t)^3)$ time to call the mod triangle check on all triples $I',J',K'$. Finally, it takes $O((n/(tw))^3 \cdot t^3) \leq O((n/w)^3)$ extra time to go through up to $(n/(tw))^3$ triples and exhaustively check the corresponding $O(t)$-node subgraph for an exact triangle. Since $t < w$, the running time is bottlenecked by $O(n^3/t^3) \leq O(n^3 (\log w)^{3/2}/w^{3/2})$.

    Next, we show correctness. Note that if there is an exact triangle solution, then it is in some triple $I'$, $J'$, $K'$, and the mod triangle check called on $I'$, $J'$, $K'$ will always return \emph{yes} no matter what prime is chosen. Thus the algorithm will find the exact triangle, \emph{provided the algorithm does not stop early and return} \emph{don't know}. Moreover, if the algorithm does not return \emph{don't know} when there is no exact triangle, then the algorithm correctly rejects.
    
    Finally, similarly to \Cref{thm:tri-det}, we prove that the algorithm outputs \emph{don't know} with probability at most $1/\poly(w)$. Hence with probability at least $1-1/\poly(w)$, the algorithm does not stop early: it will return an exact triangle when one exists, and will correctly reject when an exact triangle does not exist. 

    Fix some triple $I', J', K'$ for which we will run a mod triangle check. For each $i \in I'$, $j \in J'$, and $k \in K'$, the probability that \[w'(i,j) + w'(j,k) + w'(k,i) \neq t ~\text{ and }~ w'(i,j) + w'(j,k) + w'(k,i) = t \bmod p\] is at most  the number of prime factors of $t-w'(i,j) + w'(j,k) + w'(k,i)$ divided by the total number of primes in $[2,2^{\ell}]$, which is at most $2w \cdot (\ell/2^{\ell}) \leq (20 \log w)/w^9$. (Here, we apply the assumption that each edge weight fits in a word, in which case each weight has at most $w$ prime factors, and their sum has at most $2w$ prime factors.) By a union bound over all $t^3$ triples of nodes in $I',J',K'$, the probability that some $i,j,k \in I',J',K'$ do not form an exact triangle yet the mod triangle check returns yes on $I',J',K'$ is at most $20t^3 (\log w)/w^9$. As there are $(n/t)^3$ triples in total, the \emph{expected} number of false-positive triples (there is no exact triangle yet the mod triangle check returns yes) is at most $20n^3 (\log w)/w^9$. By Markov's inequality, the probability that the number of false-positive triples exceeds $n^3/(t^3 w^3)$ is at most \[20t^3(\log w)/w^6 < 20w^{3/2}(\log w)/w^6 < 20/w^4\] for $w$ sufficiently large. Therefore the probability the algorithm outputs \emph{don't know} is at most $20/w^4$, which is $o(1)$.
\end{proof}

\section{Attacks on 4-Cycle}
We now turn to the problem of 4-cycle detection. Our high-level approach to reduce to a problem called Concatenated Element Distinctness, and then speed up that problem via packing into words. We assume without loss of generality that the graph is bipartite: by imposing a random bipartition and deleting edges that violate it, any 4-cycle is preserved with constant nonzero probability. We now show that it suffices to solve 4-cycle in graphs with maximum degree $O(\sqrt{n}).$ 

\begin{lemma} \label{lemma:few-high-deg}
    Suppose a bipartite graph $G = (L \cup R, E)$ on $n > 2$ nodes is 4-cycle free. Then for all $d \geq \sqrt{2n}$, $G$ has at most $2n/d$ nodes with degree at least $d$.
\end{lemma}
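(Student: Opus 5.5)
Suppose for contradiction that $G$ has a set $H$ of $s > 2n/d$ vertices, each of degree at least $d$; we may shrink $H$ to exactly $s = \lfloor 2n/d \rfloor + 1$ vertices. The approach is to double count pairs of neighbors, the standard Kővári–Sós–Turán argument, using only the 4-cycle-free property. In a 4-cycle free graph, any two distinct vertices share at most one common neighbor. So if we count triples $(x, \{a,b\})$ where $a,b \in H$ are distinct and $x$ is a common neighbor of $a$ and $b$, we get at most $\binom{s}{2}$ such triples. The count is over pairs $\{a,b\}$ in $H$, and each contributes at most one $x$.

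On the other side, we count by $x$. Write $d_H(x)$ for the number of neighbors of $x$ in $H$. Then the number of triples is $\sum_{x} \binom{d_H(x)}{2}$, where $\sum_x d_H(x) = \sum_{a\in H}\deg(a) \ge sd$. Here $x$ ranges over at most $n$ vertices; bipartiteness even lets us restrict $x$ to the opposite side, but we need not use this. By convexity (Jensen),
\[ \sum_x \binom{d_H(x)}{2} \;\ge\; n\binom{sd/n}{2} \;=\; \frac{sd}{2}\left(\frac{sd}{n}-1\right). \]
Combining with the upper bound $\binom{s}{2} \le s^2/2$ gives $d\,(sd/n - 1) \le s$, that is, $s(d^2/n - 1) \le d$. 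Whenever $d^2 > n$, this yields $s \le dn/(d^2-n)$.

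The remaining step is to check that $d \ge \sqrt{2n}$ makes this bound at most $2n/d$. Since $d^2 \ge 2n$, we have $d^2 - n \ge d^2/2$, so $s \le dn/(d^2/2) = 2n/d$. This contradicts $s > 2n/d$. I do not expect a real obstacle here; the only care needed is with the floors and with Jensen when $sd/n$ is not an integer. The convexity bound for $\binom{x}{2} = x(x-1)/2$ holds for real $x$, so Jensen applies directly.

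Alternatively, \Cref{thm:reiman} could be applied to the subgraph spanned by $H$ and its neighbors. The direct double count is cleaner, though, and gives exactly the constant $2$, so that is the route I would take.
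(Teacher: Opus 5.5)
Your argument is correct, but it takes a different route from the paper's.

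The paper takes $N=2n/d+1$ vertices $v_1,\dots,v_N$ of degree at least $d$ and lower-bounds $|\bigcup_i N(v_i)|$ greedily. Each $v_i$ shares at most one neighbor with each earlier $v_j$, so it adds at least $d-(i-1)$ new vertices. This gives $|\bigcup_i N(v_i)|\ge dN-\binom{N}{2}$, which exceeds $n$ when $d\ge\sqrt{2n}$. It is a first-order inclusion--exclusion count and avoids convexity. However, $sd-\binom{s}{2}$ is not monotone in $s$, so the count must be applied to a subset of the right size. It also can never yield a contradiction once $d(d+1)/2\le n$, so it inherently needs $d$ of order roughly $\sqrt{2n}$.

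Your second-moment (K\H{o}v\'ari--S\'os--Tur\'an) count rests on the same fact, that codegrees are at most $1$. What it gains is the explicit bound $s\le dn/(d^2-n)$, valid for the whole set of degree-$\ge d$ vertices and for every $d>\sqrt n$. This specializes to $2n/d$ when $d^2\ge 2n$. So your initial shrinking of $H$ is unnecessary, and neither bipartiteness nor $n>2$ is used.

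One step deserves to be made explicit. Jensen gives $n\binom{\bar d}{2}$ with $\bar d=\frac1n\sum_{a\in H}\deg(a)\ge sd/n$. Replacing $\bar d$ by $sd/n$ uses that $y(y-1)/2$ is nondecreasing for $y\ge 1/2$, which holds here since $sd/n>2$. Non-integrality of $sd/n$ is not an issue.
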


\iffullversion
\begin{proof}[Proof of \Cref{lemma:few-high-deg}]
    Suppose there exist $N := 2n/d + 1$ nodes with degree at least $d,$ which we will call $v_1, \dots, v_N.$ Because $G$ is 4-cycle free by assumption, each of the neighborhoods of these nodes have intersection of size at most 1. $v_1$ has $d$ neighbors. The number of neighbors of $v_2$ that are not neighbors of $v_1$ is at least $d - 1$, and the neighbors of $v_3$ that are not neighbors of either $v_1$ or $v_2$ is at least $d-2.$ So the total number of distinct neighbors of these nodes is
    \[ d + (d-1) + \dots + (d-N+1)
        = dN - \sum_{i=1}^{N-1} i 
        = d\left(\frac{2n}{d}+1\right) - \frac{N(N-1)}{2} = 2n \left(1 - \frac{n}{d^2}\right) + \left(d - \frac{n}{d}\right). \]

    Observe that if we let $d \geq \sqrt{2n}$, this expression is at least $n + 1$ for $n > 2$, which is a contradiction.
\end{proof}
\fi

Thus, the number of high-degree nodes is relatively small, which allows us to brute-force over them in subquadratic time. In particular, we can run the following preprocessing step.

\begin{lemma} \label{lemma:c4-max-deg}
    Given a bipartite graph $G = (V, E)$, there is a $O(n^{3/2})$ preprocessing algorithm that either returns ``4-cycle'' or produces a graph $G'$ with maximum degree $\Delta = \sqrt{2n}$ such that solving 4-cycle on $G$ and $G'$ gives the same answer.
\end{lemma}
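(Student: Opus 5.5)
We plan to begin with a cheap edge-count test and then strip out the high-degree vertices by brute force. First, count the edges. If $|E|$ exceeds the Reiman bound $\frac{n}{4}(1+\sqrt{4n-3})$ of \Cref{thm:reiman}, return ``4-cycle''. This takes $O(n^{3/2})$ time, since we can stop counting as soon as the threshold is passed. From here on we may assume $m = O(n^{3/2})$.

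Next, let $d = \sqrt{2n}$ and let $H$ be the set of vertices of degree at least $d$. If $|H| > 2n/d = \sqrt{2n}$, then by \Cref{lemma:few-high-deg} the graph cannot be 4-cycle free, so we return ``4-cycle''. Otherwise $|H| \le \sqrt{2n}$, and we search exhaustively for a 4-cycle through some $v \in H$. For each such $v$ we run a BFS-like search. We mark every neighbor $x$ of $v$. Then, for each neighbor $x$, we scan $N(x)\setminus\{v\}$ and record for each vertex $y$ reached the neighbor $x$ through which we reached it. If some $y$ is reached twice, through distinct $x$ and $x'$, then $v,x,y,x'$ is a 4-cycle and we return ``4-cycle''. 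Each scan touches every edge at most once. Equivalently, since each $y \neq v$ is recorded at most once before a collision halts the search, the work per $v$ is $O(\sum_{x\in N(v)} \deg(x)) \le O(m + n)$. The total cost is therefore $O(|H|\cdot m) = O(\sqrt{n}\cdot n^{3/2})$, which is too large. This is the main obstacle.

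To resolve it, we will charge more carefully. Before any collision occurs, the sets $N(x)\setminus\{v\}$ for $x \in N(v)$ are pairwise disjoint. As soon as the total work exceeds $n$, some vertex must have been recorded twice, and we stop with a 4-cycle. So each search costs $O(n)$ before it either halts or finishes, and the total over $H$ is $O(|H|\cdot n) = O(n^{3/2})$.

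If no 4-cycle is found, we set $G' = G - H$. Every 4-cycle of $G$ either avoids $H$, in which case it survives in $G'$, or passes through a vertex of $H$ and would have been detected. Conversely, $G'$ is a subgraph of $G$, so the two graphs give the same answer. Deleting $H$ only lowers degrees, so every remaining vertex has degree less than $\sqrt{2n}$ in $G$ and hence in $G'$. This gives maximum degree $\Delta \le \sqrt{2n}$, and the whole preprocessing runs in $O(n^{3/2})$ time.
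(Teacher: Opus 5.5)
Your proposal is correct and takes essentially the same route as the paper: the Reiman edge-count test, the bound $|H|\le\sqrt{2n}$ via \Cref{lemma:few-high-deg}, an $O(n)$-time search from each high-degree vertex that halts at the first repeated visit, and finally deleting $H$. Your depth-two search that skips $v$ is a more precise version of the paper's ``modified BFS'' (whose ``three or fewer steps'' rule, read literally, could also fire on a $6$-cycle), and your strict inequality in the Reiman test is the correct one.
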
 

\iffullversion
\begin{proof}[Proof of \Cref{lemma:c4-max-deg}]
    If $|E| \geq \frac{n}{4}(1+\sqrt{4n-3})$ or if more than $\sqrt{2n}$ vertices with degree at least $\sqrt{2n}$, return ``4-cycle'' (by Reiman's theorem and \Cref{lemma:few-high-deg}, respectively). Otherwise, let $N \leq \sqrt{2n}$ be the number of vertices that have degree at least $\sqrt{2n}.$ From each of these vertices, run a modified version of breadth-first search that terminates when any node is visited twice (this corresponds to cycle detection). This requires $O(n)$ time per node, for a total of $O(n^{3/2})$ time. If any of these breadth-first searches terminate after running three or fewer ``steps'' outward, we have found a cycle of length at most 5. Since the graph is bipartite by assumption, it cannot have odd cycles, so we can return ``4-cycle.'' After verifying that none of the nodes in $N$ participate in a 4-cycle, we can return $G',$ the subgraph of $G$ induced by $V \setminus N.$ Both the correctness and the degree bound follow directly from the construction. 
\end{proof}
\fi

Because of this, it suffices to solve 4-cycle in the following setting:

\begin{definition}[4-Cycle Detection]
    Given a graph with $m = O(n^{3/2})$ edges and maximum degree $\Delta = O(\sqrt{n})$, decide whether there exists a 4-cycle.
\end{definition}


\begin{definition}[Concatenated Element Distinctness]
    Given $k$ column vectors $L_1, \dots, L_k$ of $n$ $\ell$-bit elements each, for every pair $i, j$ of columns, element-wise concatenate them to a produce a new list $L_{i, j}$ containing $n$ length-$2\ell$ elements. If for all $i, j$, the list $L_{i, j}$ has distinct elements, accept, else reject.
\end{definition}

We now reduce 4-cycle detection to this problem, and then use the word tricks from \Cref{sec:wordtricks} on Concatenated Element Distinctness to improve the 4-cycle runtime when we have large word size.


\begin{reminder}{\Cref{thm:reduction}}
    For $T(n,k,\ell) \geq n+k+\ell$, if Concatenated Element Distinctness can be solved in $O(T(n,k,\ell))$ time, then 4-cycle detection in $n$-node graphs can be done in $O(T(n,\sqrt{n},(\log_2 n)/2))$ time. This reduction succeeds with constant probability.
\end{reminder}

\iffullversion
\begin{proof}[Proof of \Cref{thm:reduction}]
    We begin by imposing a bipartition on the graph, randomly assigning half the vertices to each side and deleting all edges within each side. The probability that all edges in a particular 4-cycle survive is constant. We then use color coding to reduce to an equality product-like problem, similarly to the sparse triangle reductions.

    In particular, let $(L \cup R, E)$ be the new graph. Randomly partition $R$ into $\sqrt{n}$ color classes, and for every vertex $v \in L,$ if $v$ has multiple neighbors in a particular color class, delete all but one of the associated edges. As before, every edge in the graph survives with constant probability. Once we have this more structured instance, we can write column vectors corresponding to each color class. 

    For every color class $k \in [\sqrt{n}]$, write a column vector $L_k$ such that $L_k[i]$ is the element of the $k$th color class that $i \in L$ has an edge to (or $\bot$ if there is no such edge). Now, we claim that it suffices to solve the Concatenated Element Distinctness problem to solve Bounded-Degree Four Cycle Detection. To see this, if we have a yes-instance of the former, there are columns $i$ and $j$ such that if we concatenate them elementwise, the resulting column has duplicate values, say at indices $u$ and $v$. By construction, this implies that $L_i[u] = L_i[v]$ and $L_j[u] = L_j[v]$, so $u$ and $v$ share a neighbor in the $i$th color class, namely $L_i[u],$ and also in the $j$th color class, $L_j[u]$. Thus, these four nodes form a 4-cycle. The reverse direction is symmetric.

    Each of the steps here take linear time in the size of the graph, which here is $O(m + n) = O(n^{3/2})$ time. We note that there are a total of $k := \sqrt{n}$ columns, corresponding to the color classes, and that the indices of each node within a color class requires $\ell := \log_2 n /2$ bits to write down as they are over the universe $[\sqrt{n}].$
\end{proof}
\fi


\begin{lemma} \label{lemma:concateltdist}
    There is a deterministic algorithm for Concatenated Element Distinctness, on $O(\sqrt{n})$ lists each containing $n$ elements of size $O(\log n)$ that runs in $O(n^2 \log^2 n / w)$ time.
\end{lemma}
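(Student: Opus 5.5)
The plan is to reduce each of the $O(n)$ pairs $(i,j)$ of lists to a single instance of Element Distinctness on $n$ items of $2\ell = O(\log n)$ bits, and to solve each instance by packed sorting with \Cref{claim:log2sort}. There are $k^2 = O(n)$ pairs. For each pair we sort the concatenated list $L_{i,j}$, with universe size $2^{u}$ where $u = 2\ell + O(1) = O(\log n)$; the extra bits encode $\bot$. By \Cref{claim:log2sort} this costs $O(n u \log n / w) = O(n \log^2 n / w)$ time. Summing over the $O(n)$ pairs gives the claimed $O(n^2 \log^2 n / w)$ bound. The algorithm is deterministic because neither the merge of \Cref{claim:const_merge} nor the in-word sort uses randomness.

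The concrete steps are as follows.

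\emph{Packing.} Store each $L_i$ in compressed representation, with each entry padded to a fixed $\ell$-bit field. This takes $O(n\ell/w)$ words per list and is done once, in $O(n^{1.5})$ time; that cost is dominated by the main bound whenever $w = O(\sqrt{n}\log^2 n)$, and in any case it is linear in the input size.

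\emph{Forming $L_{i,j}$.} Build the packed list of $2\ell$-bit concatenations. This is an interleaving (``spreading'') of the fields of $L_i$ and $L_j$. It is a fixed wiring of bits and hence an $\ac0$ (indeed depth-zero) operation on a constant number of words. So $L_{i,j}$ is produced word by word in $O(n\ell/w)$ time, with no multiplication needed.

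\emph{Sorting and detection.} Sort $L_{i,j}$ by \Cref{claim:log2sort}. Then run the shifted-XOR trick from the proof of \Cref{claim:set-intersect}: XOR the sorted list with a copy shifted by one field, and apply \Cref{claim:const-all-zeros} to flag adjacent equal entries. Any nonzero mask word certifies a duplicate, so we reject. Boundary pairs that straddle two words are handled by also comparing the last field of each word with the first field of the next, at $O(1)$ cost per word.

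\emph{$\bot$ entries.} Entries equal to $\bot$ (or containing a $\bot$ half) must not count as duplicates, since they do not correspond to edges. I would encode $\bot$ in each list as a distinct value per row, e.g. by reserving a flag bit and storing the row index. The cleaner alternative is to mask out flagged fields after XOR by ANDing with a precomputed indicator of non-$\bot$ positions. Sorting permutes positions, though, so I would carry the flag bit as the top bit of each field and mask on it after sorting. This only changes $u$ by $O(1)$.

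I expect the main obstacle to be the in-word sorting base case of \Cref{claim:log2sort}, together with word-boundary bookkeeping, when implemented purely with $\ac0$ operations. The accounting there charges $O(\log w)$ per word for an $\nc1$ sort, and I will rely on that claim as stated. The only genuinely new ingredients are the interleaving step and the $\bot$ handling above, and both are constant-time per word. So the total is dominated by the $O(n)$ sorts, each costing $O(n\log^2 n/w)$.
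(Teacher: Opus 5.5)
Your proposal is correct and follows essentially the paper's proof: build each packed concatenation $L_{i,j}$ in $O(n\log n/w)$ time, sort it with \Cref{claim:log2sort} at $O(n\log^2 n/w)$ cost, detect adjacent duplicates by shift-XOR plus \Cref{claim:const-all-zeros}, and sum over the $O(n)$ pairs. The paper forms $L_{i,j}$ by ORing precomputed upper- and lower-aligned copies of each list stored in $2\ell$-bit blocks, which avoids any alignment bookkeeping but is equivalent to your fixed-wiring interleave. Your extra care with word-boundary pairs, $\bot$ entries, and the cost of reading the input goes beyond what the paper spells out without changing the argument; note that the $\bot$ issue really belongs to the reduction of \Cref{thm:reduction}, since the problem as defined has no $\bot$.
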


\iffullversion
\begin{proof}[Proof of \Cref{lemma:concateltdist}]
     We begin by preprocessing the input lists. Let $\ell$ be the maximum length of any element. We can think of each word as being composed of size-$2\ell$ blocks. We write down two copies of each input list, one with every element stored in the upper $\ell$ bits of the blocks, with the lower $\ell$ bits set to zeroes, and another with every element stored in the lower $\ell$ bits, with the upper $\ell$ bits set to zeroes. Note that each list requires $O(n \log n / w)$ words to write down. 

    To concatenate two column vectors $L_i$ and $L_j$ elementwise, it suffices to OR the upper-aligned version of $L_i$ with the lower-aligned version of $L_j$. This gives us a word-packed version of the concatenated list, taking up $O(n\log n/w)$ words. We then run the word-packed sorting algorithm in \Cref{claim:log2sort} with block size $2\ell,$ which takes $O(n\log^2 n/w))$ time. 

    Once we have a sorted, word-packed list, we can detect duplicate entries in linear time in the number of words. Clearly, duplicates must be adjacent entries in the list. Thus, we can simply make a second copy of the list, right-shift it by the block size $2\ell$, and XOR the two copies. As before, it suffices to find any all-zeroes block, which we can do in constant time per word using \Cref{claim:const-all-zeros}.

    We then run this procedure for all pairs of columns, which takes $O(n^2 \log^2 n /w )$ time.
\end{proof}
\fi

\subsection{Subquadratic 4-Cycle Finding in Very Dense Graphs}

Recall that by \Cref{thm:reiman}, a 4-cycle free graph has at most $\frac{n}{4}\left(1+\sqrt{4n-3}\right) = O(n^{3/2})$ edges. Here, we give a sublinear time (in the number of edges) algorithm that finds a 4-cycle when the graph is noticeably denser, i.e., when the number of edges is $m \geq n^{3/2+\delta}$. Recall that a \emph{$k$-core} of $G$ is a maximal subgraph $G'$ such that every vertex has degree at least $k$.  

\begin{lemma} \label{lemma:sqrtn-core}
    For all sufficiently large $n$, every graph $G$ with $m = n^{3/2 + \delta}$ edges has a $(\sqrt{n}+1)$-core containing at least $\Omega(\sqrt{m})$ vertices.
\end{lemma}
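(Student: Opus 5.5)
We will obtain the core by the standard peeling argument and then bound its size by counting edges. Set $k := \sqrt{n}+1$. Repeatedly delete any vertex whose current degree is less than $k$, together with its incident edges, until none remains. What survives is exactly the $k$-core of $G$ (possibly empty). Each deleted vertex removes fewer than $k$ edges at the moment of deletion. At most $n$ vertices are deleted, so the peeling destroys fewer than $nk = n^{3/2}+n$ edges in total.

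Since $m = n^{3/2+\delta}$ with $\delta>0$, we have $m - n^{3/2} - n \ge m/2$ for all sufficiently large $n$. (For $\delta$ bounded away from $0$ this is immediate; in general we only need $n^{\delta} > 2 + 2n^{-1/2}$.) The surviving core $H$ therefore contains $m_H \ge m/2$ edges, and in particular it is nonempty.

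To show $H$ has $\Omega(\sqrt m)$ vertices, we use only the trivial bound that a simple graph on $h$ vertices has at most $\binom{h}{2} < h^2/2$ edges. Then $h^2/2 > m_H \ge m/2$ gives $h > \sqrt{m}$, which is the required $\Omega(\sqrt m)$. If the authors intend a sharper count, an alternative is $m_H \le h\cdot \Delta_H$, but the trivial bound already suffices.

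The only delicate step is the edge accounting in the peeling. Each vertex contributes fewer than $k$ removed edges when it is deleted, and edges deleted along with an earlier vertex are not counted again, so the total is below $nk$. The phrase ``sufficiently large $n$'' absorbs the constant-factor slack needed for $n^{3/2}+n \le m/2$. No earlier results from the paper are needed; Reiman's theorem is not used here.
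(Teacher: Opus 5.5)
Your proof is correct and follows the paper's route: peel vertices of degree below $\sqrt{n}+1$, bound the number of deleted edges, and compare the surviving edge count with $\binom{s}{2}$. The only difference is that you use the cruder bound $nk$ on deleted edges (the paper uses $k(n-s)$) and absorb it into $m/2$, so you read off $h > \sqrt{m}$ directly instead of solving the paper's quadratic inequality in $s$.
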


\iffullversion
\begin{proof}[Proof of \Cref{lemma:sqrtn-core}]
    Recall that we may obtain a $k$-core by repeatedly deleting all vertices with degree smaller than $k$ and all of their incident edges, until all remaining vertices have degree at least $k$. Suppose that the $k := \sqrt{n}+1$ core obtained by this procedure has $s$ vertices. Then we will recover it after $n-s$ rounds of peeling, deleting at most $k(n-s)$ edges at each step. It follows that $G'$ has at least $m- k(n-s)$ edges. Since $G'$ has $s$ vertices, $\binom{s}{2}$ is an upper bound on its number of edges. Therefore, we have that 
    \begin{align*}
    m - k(n-s) &\le \frac{s(s-1)}{2}  \\
    0 &\le s^2 + (-2k-1)s + 2nk - 2m
    \intertext{We are interested in values of $s$ larger than its positive root, so}
    s &\ge \frac{1}{2} \left((2k+1) + \sqrt{4k^2 + 4k + 1 - 8nk + 8m}\right) \\
      &= \sqrt{n} + \frac{3}{2} 
      + \frac{1}{2}\sqrt{4n + 12\sqrt{n} + 9 
        - 8n^{3/2} - 8n + 8n^{3/2+\delta}} \\
      &= \sqrt{n} + \frac{3}{2} 
      + \frac{1}{2}\sqrt{8n^{3/2+\delta} 
        - 8n^{3/2} - 4n + 12\sqrt{n} + 9} \\
      &\ge \Omega(n^{3/4 + \delta/2}).
\end{align*}
\end{proof}
\fi

We will use this fact to provide a sublinear (in the size of the graph) time algorithm for finding a 4-cycle when the graph is extremely dense. 

\begin{reminder}{\Cref{thm:sublinear}}
There is a Las Vegas algorithm that, given a graph with $n$ vertices and $n^{3/2 + \delta}$ edges for $\delta \in (0, \frac{1}{2})$, returns a 4-cycle and runs in expected $O(n^{5/4-\delta/2})$ time. 
\end{reminder}

\iffullversion
\begin{proof}[Proof of \Cref{thm:sublinear}]
For simplicity, let the vertex set $V = [n]$. We assume access to two oracles with information about the graph. The first takes queries of the form $(v, i) \in [n] \times [n]$, and outputs either the $i$-th neighbor of $v \in V$, or $\bot$ if $i > \deg(v)$. The second is a degree oracle that takes $v \in V$ as input and outputs $\deg(v)$. We repeat the following procedure until we find a 4-cycle: 
    \begin{enumerate}
        \item Sample a vertex $v \gets V$ uniformly at random.
        \item Make at most $n-1$ queries to the neighbor oracle to construct a list of all of $v$'s neighbors. Query the degree oracle for each neighbor.
        \item If $v$ has at least $\sqrt{n}+1$ neighbors, each of which also has degree at least $\sqrt{n}+1$, then by the pigeonhole principle, $v$ participates in a 4-cycle with two of these neighbors (the total number of distinct $2$-paths to $v$ is at least $(\sqrt{n}+1)\cdot \sqrt{n} > n$, so some node has two distinct $2$-paths to $v$). Enumerate the first $\sqrt{n}+1$ neighbors of each of the neighbors of $v$ by querying the neighbor oracle, and output the 4-cycle once it is found.
        \item If this condition does not hold, repeat.
    \end{enumerate}

    We claim that if we sample a vertex $v$ in the $(\sqrt{n}+1)$-core of $G$, we will satisfy the condition in step 3. This is because by definition, $v$ is in a subgraph such that all vertices have at least $\sqrt{n}+1$ neighbors. Since the size of the $(\sqrt{n}+1)$-core is at least $\Omega(n^{3/4 + \delta/2})$, we expect to sample a vertex in the core in $O(n^{1/4-\delta/2})$ iterations. Each iteration takes $O(n)$ time, so our overall expected runtime is $O(n^{5/4-\delta/2})$.
\end{proof}
\fi

\section{Conclusion}

In this paper, we have shown how a variety of ideas can be applied to achieve faster algorithms for some tricky triangle problems at the heart of fine-grained complexity. We conclude with some of the most significant questions left open by our work.

\begin{itemize}
    \item We have shaved $\log(n)$ and word factors from the running times of sparse graph problems, assuming a bound on the maximum degree. Can we still improve the running times of these algorithms in the more general case of sparse graphs without a degree bound? For example, is there an algorithm for {\sc All-Edges Sparse Triangle} running in $o(m^{4/3})$ time on $m$-edge graphs, assuming optimal matrix multiplication? Our approach requires at least two of the nodes in the triangle to have low degree, which is not enough to apply low-degree/high-degree tricks such as those in \cite{AYZ97}.
    \item Are there \emph{deterministic} algorithms for these triangle problems that have similar running times? All of our algorithms rely on the use of randomness. Note that since all of our algorithms use $O(\log n)$-wise independent hash functions and linear hashing for small $k$, we only require at most $\poly(\log n)$ bits of randomness. 
    \item The {\sc All-Edges Sparse Triangle} problem cannot be solved much faster than the obvious running time, under the 3SUM and APSP hardness hypotheses. Is there a fine-grained reduction from {\sc All-Edges Sparse Triangle} to {\sc Sparse Triangle Detection}, so that detecting a triangle faster on sparse (or low-degree) graphs would yield a faster algorithm for the all-edges case? Such a reduction is well-known in the dense case~\cite{VW18}.

    \item Can 4-cycle detection be solved in $O(n^2 \log n)/w$ time on the Word RAM for $w \gg \Omega(\log n)$? Is there any reason to believe that such an algorithm may not exist? 
\end{itemize}

\bibliography{refs}

\end{document}